\documentclass[11pt]{article}

\usepackage{latexsym}
\usepackage{amsmath, amssymb} %
\usepackage{xspace}
\usepackage{multirow}

\newtheorem{xdefinition}{Definition}
\newtheorem{xobservation}{Observation}
\newtheorem{xtheorem}{Theorem}
\newtheorem{xlemma}{Lemma}
\newtheorem{xproposition}{Proposition}
\newtheorem{xcorollary}{Corollary}
\newenvironment{definition}{\begin{xdefinition}\rm}%
{\hspace*{\fill}\raisebox{-1pt}{\boldmath$\Box$}\end{xdefinition}}
{\hspace*{\fill}\raisebox{-1pt}{\boldmath$\Box$}\end{xobservation}}
\newenvironment{theorem}{\begin{xtheorem}\rm}{\end{xtheorem}}
\newenvironment{lemma}{\begin{xlemma}\rm}{\end{xlemma}}
\newenvironment{proposition}{\begin{xproposition}\rm}{\end{xproposition}}

\newenvironment{proof}{\begin{trivlist}\item[]{\bf Proof }}%
{\hspace*{\fill}\raisebox{-1pt}{\boldmath$\Box$}\end{trivlist}}

\newcommand{\opt}{{\ensuremath{\textsc{Opt}}}\xspace}
\newcommand{\maj}{{\ensuremath{\textsc{Maj}}}\xspace}
\newcommand{\nav}{{\ensuremath{\textsc{Nai}}}\xspace}
\newcommand{\eag}{{\ensuremath{\textsc{Eag}}}\xspace}
\newcommand{\majnospace}{{\ensuremath{\textsc{Maj}}}}
\newcommand{\navnospace}{{\ensuremath{\textsc{Nai}}}}
\newcommand{\eagnospace}{{\ensuremath{\textsc{Eag}}}}

\newcommand{\algA}{{\ensuremath{\mathcal{A}}}\xspace}
\newcommand{\algB}{{\ensuremath{\mathcal{B}}}\xspace}

\DeclareMathOperator{\WR}{\textrm{WR}}
\DeclareMathOperator{\Max}{\textrm{Max}}
\DeclareMathOperator{\Min}{\textrm{Min}}

\newcommand{\WEHAVE}{\!:\;}
\newcommand{\TO}{\sqsubseteq}
\newcommand{\TRIP}[3]{\ensuremath{(#1,#2,#3)}\xspace}

\begin{document}

\title{The Frequent Items Problem in \\ Online Streaming
under \\ Various Performance Measures\,\thanks{Supported
in part by the Danish Council for Independent Research.  Part 
of this work was done while the
authors were visitng the University of Waterloo.}}

\author{Joan Boyar \hspace{2em} Kim S. Larsen \hspace{2em} Abyayananda Maiti \\[1ex]
        University of Southern Denmark \\
        Odense, Denmark \\[1ex]
        {\tt \{joan,kslarsen,abyaym\}@imada.sdu.dk}}

\date{}

\maketitle

\begin{abstract}
In this paper, we strengthen the competitive analysis results obtained
for a fundamental online streaming problem, the Frequent Items Problem.
Additionally, we contribute with a more detailed analysis of this
problem, using alternative performance measures,
supplementing the insight gained from competitive analysis.
The results also contribute to the general study of
performance measures for online algorithms. It has long been known
that competitive analysis suffers from drawbacks in certain situations,
and many alternative measures have been proposed. However, more
systematic comparative studies of performance measures have
been initiated recently, and we continue this work, using competitive
analysis, relative interval analysis, and relative worst order
analysis on the Frequent Items Problem.
\end{abstract}

\section{Introduction}
The analysis of problems and algorithms for streaming applications, treating
them as online problems, was started in~\cite{Becchetti09}. 
In online streaming, the
items must be processed one at a time by the algorithm, making some irrevocable
decision with each item. A fixed amount of resources is assumed.
In the frequent items problem~\cite{CH08},
an algorithm must store an item,
or more generally a number of items, in a buffer,
and the objective is to store the items appearing most frequently in
the entire stream.
This problem has been studied in~\cite{Giannakopoulos12}.
In addition to probabilistic considerations,
they analyzed deterministic algorithms using competitive analysis.
We analyze the frequent items problem using relative interval
analysis~\cite{Dorrigiv09} and relative worst order analysis~\cite{Boyar07}.
In addition, we tighten the competitive analysis~\cite{ST85,KMRS88}
results from~\cite{Giannakopoulos12}.

It has been known since the start of the area that competitive analysis does not
always give good results~\cite{ST85} and many alternatives have been proposed.
However, as a general rule, these alternatives have been fairly problem specific
and most have only been compared %
to competitive analysis.
A more comprehensive study of a larger number of performance measures
on the same problem scenarios was initiated in~\cite{BIL09p}
and this line of work has been continued in~\cite{BLM12p,BGL12p,BGLtap}.
With this in mind, we would like to produce complete and tight results,
and for that reason, we focus on a fairly simple combinatorial problem
and on simple algorithms for its solution, incorporating greediness
and adaptability trade-offs to a varying extent.

Finally, we formalize a notion of competitive function, as opposed to
competitive ratio, in a manner which allows us to focus on the constant
in front of the high order term.
These ideas are also used to generalize
relative worst order analysis.

\section{Preliminaries}\label{sec:prob}
This is a streaming problem, but as usual in online algorithms
we use the term sequence or input sequence to refer to
a stream.
We denote an \emph{input sequence} by $I = a_1, a_2,\ldots, a_n$,
where the items $a_i$ are from some universe $\mathcal{U}$,
assumed to be much larger than $n$.
We may refer to the index also as the \emph{time step}.
We consider online algorithms, which means that items are given
one by one. 

We consider the simplest possible frequent items
problem: %
An algorithm has a \emph{buffer} with space for one item.
When processing an item, the algorithm can either discard the item
or replace the item in the buffer by the item being processed.
The objective is to keep the most frequently occurring items in the buffer,
where frequency is measured over the entire input, i.e.,
when an algorithm must make a decision, the quality of the decision
also depends on items not yet revealed to the algorithm.
We define this objective function formally:

Given an online algorithm $\mathcal{A}$ for this problem,
we let $s^{\mathcal{A}}_t$ denote \emph{the item in the buffer at time step~$t$}.
We may omit the superscript when it is clear from the context which
algorithm we discuss.

Given an input sequence $I$ and an item $a\in \mathcal{U}$, the {\em frequency}
of the item is defined as $f_I(a) = \frac{n_I(a)}{n}$,
where $n_I(a) = | \{i\mid a_i=a\}|$ is the number of occurrences of $a$ in $I$.
The objective is to maximize the
\emph{aggregate frequency}~\cite{Giannakopoulos12}, defined by
$F_{\algA}(I)=\sum_{t=1}^n f_I(s^{\mathcal{A}}_t)$,
i.e., the sum of the frequencies of the items stored in the
buffer over the time.

We compare the quality of the achieved aggregate frequencies
of three different deterministic online algorithms
from~\cite{Giannakopoulos12}:
the naive algorithm (\nav), the eager algorithm (\eag),
and the majority algorithm (\maj).
All three are practical streaming 
algorithms, being simple and using very little extra space.

\begin{definition}\label{def:nav}
\mbox{\rm [\navnospace]} \nav buffers every item as it arrives,
i.e., $s^{\nav}_t = a_t$ for all $t=1,2,\ldots, n$.
\end{definition}

The algorithm \eag switches mode upon detecting a {\em repeated item}, an
item which occurs in two consecutive time steps.
\begin{definition}\label{def:eag}
\mbox{\rm [\eagnospace]} Initially, \eag buffers every item as it arrives.
If it finds %
a repeated item, then it keeps
that item until the end, i.e., let
$$t^* =\min_{1\leq t\leq n-1} \{t\mid a_t = a_{t+1}\},$$
if such a $t$ exists, and otherwise $t^* = n$.
Then \eag is the algorithm with $s^{\eag}_t = a_t$ for all $t\leq t^*$ and
$s^{\eag}_t = a_{t^*}$ for all $t>t^*$.
\end{definition}

The algorithm \maj keeps a counter along with the buffer.
Initially, the counter is set to zero.
\begin{definition}\label{def:maj}
\mbox{\rm [\majnospace]}
If the counter is zero, then \maj buffers the arriving item and sets the counter to one.
Otherwise, if the arriving item is the same as the one currently buffered,
\maj increments the counter by one, and otherwise decrements it by one.
\end{definition}

Finally, as usual in online algorithms, we let \opt denote an optimal
offline algorithm. \opt is, among other things, used in competitive
analysis as a reference point, since no online algorithm can do better.
If $\algA$ is an algorithm, we let $\algA(I)$ denote the result (profit)
of the algorithm, i.e., $\algA(I)=F_{\algA}(I)$.

In comparing these three algorithms, we repeatedly
use the same two families of sequences;
one where \eag performs particularly poorly and one where \maj
performs particularly poorly.

\begin{definition}\label{sequences}
We define the sequences
$$E_n = a,a,b,b,\ldots,b,$$ where there are $n-2$ copies of $b$, and
\[ W_n = \left\{ \begin{array}{ll}
a_1,a_0,a_2,a_0,\ldots, a_{\frac{n}{2}},a_0 & \mbox{for even $n$}\\
a_1,a_0,a_2,a_0,\ldots, a_{\lfloor\frac{n}{2}\rfloor},a_0,
a_{\lceil\frac{n}{2}\rceil} & \mbox{for odd $n$}.
\end{array} \right. \]
\end{definition}

The four algorithms, including \opt,
obtain the aggregate frequencies below on these two families of sequences.
The arguments are simple, but fundamental, and also serve as an
introduction to the algorithmic behavior of these algorithms.
\begin{proposition}\label{prop_sequences}
The algorithms' results on $E_n$ and $W_n$ are as in
Fig.~\ref{fig-profit}.
\begin{figure}[ht]
\begin{center}
$\begin{array}{|l||c|c|} \hline
& E_n & W_n \\ \hline\hline
\rule[-3ex]{0em}{7ex}\nav & n-4+\frac{8}{n} & 
\left\{ \begin{array}{ll}
\frac{n}{4} + \frac{1}{2} & \mbox{for even $n$}\\[.5ex]
\frac{n}{4} + \frac{3}{4n} & \mbox{for odd $n$}
\end{array} \right. \\ \hline
\rule[-2ex]{0em}{5ex}\eag & 2 & \mbox{as \nav} \\ \hline
\rule[-2ex]{0em}{5ex}\maj & n-6+\frac{16}{n} & 1 \\ \hline
\rule[-3ex]{0em}{7ex}\opt & \mbox{as \nav} &
\left\{ \begin{array}{ll}
\frac{n}{2} - \frac{1}{2} + \frac{1}{n} & \mbox{for even $n$}\\[.5ex]
\frac{n}{2} -1 + \frac{3}{2n}  & \mbox{for odd $n$}
\end{array} \right. \\ \hline
\end{array}$
\end{center}
\caption{The algorithms' aggregate frequencies on $E_n$ and $W_n$.}
\label{fig-profit}
\end{figure}
\end{proposition}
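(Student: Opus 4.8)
The plan is to verify the eight entries of Fig.~\ref{fig-profit} one algorithm at a time, after first isolating one structural fact that disposes of all the \opt entries together. That fact is: for any algorithm $\algA$ and any step $t$, the buffer content $s^{\algA}_t$ is either empty or equal to one of $a_1,\dots,a_t$, since $s^{\algA}_t\in\{a_t,s^{\algA}_{t-1}\}$ by definition and the claim then follows by induction on $t$. Hence $f_I(s^{\algA}_t)\le\max_{1\le i\le t}f_I(a_i)$ for every $t$, so $\algA(I)\le\sum_{t=1}^n\max_{1\le i\le t}f_I(a_i)$; in particular this upper-bounds $\opt(I)$. For $E_n$ with $n\ge4$ we have $f(b)\ge f(a)$ and $b$ first occurs at step~$3$, so the right-hand side is $2f(a)+(n-2)f(b)$, which is realized by the admissible online trajectory that is precisely \nav; thus $\opt(E_n)=F_{\nav}(E_n)$, explaining the ``as \nav'' entry. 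For $W_n$ the most frequent item $a_0$ has appeared by step~$2$, so the right-hand side equals the aggregate frequency of the admissible strategy storing $a_1$ at step~$1$ and $a_0$ thereafter; a short simplification, in each parity, gives the two claimed \opt values.

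The \nav and \eag entries are then immediate. For \nav, $s^{\nav}_t=a_t$, so $F_{\nav}(E_n)=2f(a)+(n-2)f(b)$, and $F_{\nav}(W_n)$ is the sum of $\frac1n$ over the odd positions plus $f_{W_n}(a_0)$ over the even positions; computing these with the obvious counts of positions yields the stated closed forms for even and odd $n$. For \eag, $E_n$ has $a_1=a_2$, so $t^*=1$ and $s^{\eag}_t=a$ for all $t$, giving $F_{\eag}(E_n)=n\cdot\frac2n=2$; and $W_n$ has no two equal consecutive items, so $t^*=n$ and \eag agrees with \nav on $W_n$.

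The \maj entries need a short counter trace. On $E_n$ the counter runs $1,2,1,0$ over steps $1$--$4$ with the buffer fixed at $a$, then is reset to $b$ at step~$5$ and only increases afterwards, so the buffer is $a$ on steps $1$--$4$ and $b$ on steps $5$--$n$; thus $F_{\maj}(E_n)=4f(a)+(n-4)f(b)$, which simplifies to $n-6+\frac{16}{n}$ (for $n\ge4$). On $W_n$ an easy induction shows the counter is $1$ after every odd step and $0$ after every even step, so the buffer holds $a_k$ on steps $2k-1$ and $2k$; therefore each of $a_1,\dots,a_{\lfloor n/2\rfloor}$ contributes $2\cdot\frac1n$, plus, when $n$ is odd, $a_{\lceil n/2\rceil}$ contributes $\frac1n$ from the last step, and the sum collapses to $1$ in both parities.

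I expect no real difficulty anywhere: the only parts needing more than a line are the \maj counter traces and keeping the even/odd splits straight (and the small-$n$ boundary, e.g.\ $n=3$ for $E_n$, which is irrelevant to the asymptotic results that follow). Everything else is the single induction above plus routine arithmetic.
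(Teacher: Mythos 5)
Your proposal is correct and follows essentially the same route as the paper: a direct computation of each algorithm's buffer contents on $E_n$ and $W_n$ (including the same \maj counter trace giving $a$ in the buffer for four steps on $E_n$ and each $a_k$ for two steps on $W_n$). The only difference is that you justify the \opt entries rigorously via the prefix-maximum bound $\algA(I)\le\sum_{t=1}^n\max_{1\le i\le t}f_I(a_i)$, whereas the paper simply asserts that \opt is forced to behave like \nav on $E_n$ and must buffer $a_1$ before $a_0$ on $W_n$; this is a welcome but minor elaboration, not a different approach.
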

\begin{proof}
In $E_n$, the frequency of $a$ is $\frac{2}{n}$ and the frequency of $b$ is $\frac{n-2}{n}$.
Thus $\nav(E_n)= 2\frac{2}{n}+ (n-2)\frac{n-2}{n} = n-4+\frac{8}{n}$.
In $W_n$, the frequency of $a_0$ is $\lfloor\frac{n}{2}\rfloor/n$, and the frequencies of all the other $a_i$, $1\leq i \leq \lceil \frac{n}{2}\rceil$, are $\frac{1}{n}$. Thus, $\nav(W_n)= \lceil\frac{n}{2}\rceil\frac{1}{n} + \lfloor \frac{n}{2} \rfloor \frac{\lfloor\frac{n}{2}\rfloor}{n}$. Considering both even and odd $n$ gives the required result.

When processing $E_n$, \eag keeps $a$ in its buffer. Hence, $\eag(E_n)= n\frac{2}{n} = 2$.
Since $W_n$ has no repeated item, $\eag(W_n)=\nav(W_n)$.

For $E_n$, \maj will have $a$ in its buffer for the first four time steps, so $\maj(E_n)$ is $4\frac{2}{n}+(n-4)\frac{n-2}{n} = n-6+\frac{16}{n}$. For $W_n$, \maj brings each $a_i$, $1\leq i\leq n$, into its buffer and never brings $a_0$ into its buffer. Thus, $\maj(W_n) = n\frac{1}{n} = 1$.

With $E_n$, \opt is forced to perform the same as \nav.
In $W_n$, \opt must buffer $a_1$ in the first time step, but it buffers $a_0$ for the remainder of the sequence. Thus, $\opt(W_n) = \frac{1}{n}+(n-1)\frac{\lfloor \frac{n}{2}\rfloor}{n}$. Considering both even and odd $n$ gives the required result.
\end{proof}

\begin{definition}\label{def:worstpermut}
Let ${\cal A}$ be any online algorithm.
We denote the worst aggregate frequency of ${\cal A}$ 
over all the
permutations $\sigma$ of $I$ by
${\cal A}_{W}(I) = \min_{\sigma} {\cal A} (\sigma(I))$.
\end{definition}
It is convenient to be able to consider items in order of their
frequencies.
Let $D(I) = a_1', a_2',\ldots, a_n'$ be a sorted list of the item
in $I$ in nondecreasing order of frequencies.
For example, if $I = a, b, c, a, b, a$, then $D(I) = c,b,b,a,a,a$.
We will use the notation $D(I)$ throughout the paper.
\begin{lemma}\label{lem:wmaj}
For odd $n$,
 $\maj_{W}(I) = 2\sum_{i=1}^{\lfloor \frac{n}{2} \rfloor} f_I(a_i')
+ f_I(a_{\lceil \frac{n}{2}\rceil}')$,
and for even $n$,
$\maj_{W}(I) = 2\sum_{i=1}^{\frac{n}{2}} f_I(a_i')$,
where the $a_i'$ are the items of $D(I)$.
\end{lemma}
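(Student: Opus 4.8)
The plan is to prove the identity by showing $\maj(\sigma(I))$ is at least the stated value for \emph{every} permutation $\sigma$, and then exhibiting one permutation that attains it. For a run of \maj, let $\tau(x)$ denote the number of time steps at which $x$ occupies the buffer; then $\maj(\sigma(I)) = \sum_x \tau(x)\,f_I(x)$ and $\sum_x \tau(x) = n$, so the question reduces to which integer vectors $(\tau(x))_x$ are realizable.

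The main tool is a \emph{phase} decomposition of the stream: the time steps at which the counter returns to $0$, together with the end of the stream, cut the stream into consecutive blocks, and throughout each block the buffer holds a single item. A block ending with counter $0$ has even length $2k$ and consists of exactly $k$ occurrences of its buffered item together with $k$ occurrences of other items (during it the counter performs a $\pm 1$ walk from $1$ back to $0$); the final block, if it ends with a positive counter and has length $L$, uses at least $\lceil L/2\rceil$ occurrences of its buffered item. In every case a block of length $\ell$ buffering $x$ consumes at least $\ell/2$ occurrences of $x$; since distinct blocks use disjoint occurrences, summing over the blocks that buffer $x$ gives $\tau(x)\le 2 n_I(x)$. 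Therefore $\maj(\sigma(I))\ge \min\{\sum_x t_x f_I(x) : t_x\ge 0,\ \sum_x t_x = n,\ t_x\le 2 n_I(x)\}$, and since $f_I(x)=n_I(x)/n$ this minimum is $\frac1n$ times the smallest total count over all ways of choosing $n$ of the $n$ occurrences of $I$ using each at most twice. That smallest total is obtained by taking the $\lfloor n/2\rfloor$ least frequent occurrences twice and, for odd $n$, one additional copy of the next one, which is precisely $2\sum_{i\le n/2} f_I(a_i')$ for even $n$ and $2\sum_{i\le \lfloor n/2\rfloor} f_I(a_i') + f_I(a_{\lceil n/2\rceil}')$ for odd $n$.

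For the matching upper bound we construct a witnessing permutation. Split $D(I)=a_1',\dots,a_n'$ into its $\lfloor n/2\rfloor$ ``low'' occurrences and its $\lceil n/2\rceil$ ``high'' occurrences. If no item of $I$ occurs more than $n/2$ times, Hall's theorem yields a matching sending each low occurrence to a high occurrence of a \emph{different} item (leaving $a_{\lceil n/2\rceil}'$ unmatched when $n$ is odd); listing the matched pairs consecutively as $\ell_1,h_1,\ell_2,h_2,\dots$, and appending $a_{\lceil n/2\rceil}'$ for odd $n$, makes \maj hold each $\ell_i$ in its buffer for exactly two steps and the trailing item for one step, so the aggregate frequency equals the claimed value. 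If some item $x$ occurs more than $n/2$ times, then all $\lceil n/2\rceil$ high occurrences are copies of $x$; list the $n-n_I(x)$ non-$x$ occurrences, each immediately followed by a fresh copy of $x$, and then append the remaining $2n_I(x)-n$ copies of $x$ as a single block, so \maj runs length-two blocks on the non-$x$ items followed by one partial final block on $x$, and a short computation again gives the formula. The hardest part is this ``majority'' case: the conflict-free matching need not exist, so all the slack must be absorbed into the final partial block; verifying the phase-length accounting behind $\tau(x)\le 2 n_I(x)$ is the other point demanding care.
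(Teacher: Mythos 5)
Your proof is correct, and its overall two-step shape (a universal lower bound on $\maj(\sigma(I))$ plus a witnessing permutation) matches the paper's, but the machinery is genuinely different in both halves. For the lower bound, the paper simply pairs each decrement step with an earlier increment of the same buffered item and concludes that at least $\lceil n/2\rceil$ requests are charged; you instead cut the stream into blocks at counter-zero times, show a block of length $\ell$ buffering $x$ consumes at least $\ell/2$ occurrences of $x$, deduce the capacity constraint $\tau(x)\le 2n_I(x)$, and read off the bound as the minimum of a knapsack-type relaxation, which is a cleaner and more explicitly verifiable route to exactly the claimed expression. For the upper bound, the paper exhibits the single interleaved permutation $a_1',a_n',a_2',a_{n-1}',\ldots$ and asserts each buffered request contributes twice (glossing over the possibility that $a_i'$ and $a_{n+1-i}'$ are the same item, e.g.\ when one item is in the majority); you replace this with a Hall's-theorem matching of low occurrences to high occurrences of different items, plus a separate explicit construction when some item occurs more than $n/2$ times, which handles that subtlety head-on at the cost of a case analysis. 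One small point to tighten: in the odd-$n$ matching case you need the unmatched high occurrence to be $a'_{\lceil n/2\rceil}$ specifically (otherwise the constructed value could exceed the claimed one); this does follow, since under your case assumption $n_I(x)\le\lfloor n/2\rfloor$ Hall's condition still holds for matching the lows into $H\setminus\{a'_{\lceil n/2\rceil}\}$, but it deserves a sentence. Net effect: your argument is longer than the paper's sketch but more rigorous, while the paper's interleaving gives a more economical witness.
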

\begin{proof}
Every time step where the counter is decremented can be paired with an
earlier one
where it is incremented and the same item is in the buffer.
So, at least $\lceil\frac{n}{2}\rceil$ requests contribute to the aggregate frequency of the algorithm.
One can order the items so that exactly the $\lceil \frac{n}{2}\rceil$
requests to that many least frequent items are buffered as follows:
Assuming $n$ is even, then the worst permutation is $a_1', a_n',  a_2',a_{n-1}',\ldots  a_{\frac{n}{2}}', a_{\frac{n}{2}+1}'$.
All  (but the last request when $n$ is odd) of the requests which lead
to an item entering the buffer contribute twice, since they are also in
the buffer for the next step.
\end{proof}

\section{Competitive Analysis}\label{sec:comp}
An online streaming problem was first studied from an online algorithms
perspective using competitive analysis by Becchetti and 
Koutsoupias~\cite{Becchetti09}.
Competitive analysis\cite{ST85,KMRS88} evaluates an online algorithm in comparison to an optimal offline algorithm.
For a maximization problem, an algorithm, \algA is called $c$-competitive,
for some constant $c$, if there exists a constant $\alpha$ such that for all finite input sequences $I$, $\opt(I) \leq c\cdot \algA(I) +\alpha$.
The competitive ratio of \algA is the infimum over all $c$ such that \algA is $c$-competitive.
Since, for the online frequent items problem, the relative performance of algorithms depends on the length of $I$, we define
a modified and more general version of competitive analysis,
providing a formal basis for our own claims as well as claims made in
earlier related work.
Functions have also been considered in~\cite{DorrigivL05}.
Here, we focus on the constant in front of the most significant term.
Our definition can be adapted easily to minimization problems in the same
way that the adaptations are handled for standard competitive analysis.
In all these definitions, when $n$ is not otherwise defined,
we use it to denote $|I|$, the length of the
sequence $I$. As usual, when using asymptotic notation in inequalities,
notation such as $f(n) \leq g(n) + o(g(n))$ means that there exists a
function $h(n)\in o(g(n))$ such that $f(n) \leq g(n) + h(n)$.
Thus, we focus on the multiplicative factors that relate the online
algorithm's result to the input length.

\begin{definition} \label{def:com}
An algorithm \algA is $f(n)$-{\em competitive} if
\[\forall I\WEHAVE \opt(I) \leq (f(n)+o(f(n)))\cdot\algA(I).\]
\algA has {\em competitive function} $f(n)$ if \algA is
$f(n)$-{\em competitive} and for any $g(n)$ such that \algA
is $g(n)$-{\em competitive},
$\lim_{n \rightarrow  \infty}\frac{f(n)}{g(n)} \leq 1$.

If algorithm \algA has {\em competitive function} $f(n)$ and algorithm \algB has {\em competitive function} $f'(n)$, then \algA is better than \algB according to competitive analysis if
$\lim_{n \rightarrow  \infty}\frac{f(n)}{f'(n)} < 1$.
\end{definition}

Thus, the concept of competitive function is an exact characterization
up to the level of detail we focus on. It can be viewed as an
equivalence relation, and if $\lim_{n\rightarrow\infty}\frac{f(n)}{g(n)}=1$ for two
functions $f(n)$ and $g(n)$, then they belong to (and are representatives of)
the same equivalence class. For example,
$\frac{\sqrt{n}}{2}$ and $\frac{\sqrt{n}}{2-\frac{1}{\sqrt{n}}}$ are
considered equivalent,
whereas $\frac{\sqrt{n}}{2}$ and $\frac{\sqrt{n}}{4}$ are not.

All three algorithms discussed here are non-competitive according to
the original definition.
However, information regarding the relative quality of these algorithms
can be obtained by considering the most significant constants
from the corresponding functions.
Giannakopoulos et~al.\ has proved that no randomized algorithm for the
online frequent items problem, where the buffer has room for one item,
can have a competitive function better than
$\frac{1}{3}\sqrt{n}$~\cite{Giannakopoulos12}.
That result can be strengthened for the deterministic case:
\begin{theorem}\label{thm:comp_all}
No deterministic algorithm for the online frequent items problem can have a competitive function better than $\frac{\sqrt{n}}{2}$.
\end{theorem}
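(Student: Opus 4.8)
The plan is an adaptive adversary argument. Fix an arbitrary deterministic algorithm $\alg$; for every sufficiently large $n$ I will build, using $\alg$'s own behaviour, an input $I$ of length $n$ with $\opt(I)\ge(\tfrac{\sqrt n}{2}-o(\sqrt n))\,\alg(I)$. Adaptivity is essential here: no single sequence works, since, for instance, the sequence consisting of one item repeated $\lfloor\sqrt n\rfloor$ times followed by distinct filler is bad for \nav but harmless for \eag, while $E_n$ is the reverse.

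Let $k=\lfloor\sqrt n\rfloor$. The input has three parts. \emph{Probe phase:} present $k$ pairwise distinct fresh items $z_1,\dots,z_k$, one per step. At each of these steps $\alg$'s buffer holds one of $z_1,\dots,z_k$, so these items are buffered for $k$ steps in total; hence among the $k-1$ indices $j$ with $z_j\ne s_k$ (where $s_k$ is $\alg$'s buffer content after step $k$) there is one, call it $z^*$, that $\alg$ buffers during at most one probe step. This pigeonhole choice of target is the crux of the construction. \emph{Filler phase:} present $n-2k+1$ further pairwise distinct fresh items; since $z^*$ is absent here and $\alg$ does not hold $z^*$ entering this phase, $\alg$ never holds $z^*$ during it. \emph{Pump phase:} present $z^*$ an additional $k-1$ times, so $n_I(z^*)=k$ and $f_I(z^*)=k/n$, while every other item of $I$ has frequency $1/n$.

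Now I bound both sides. Since $\alg$'s buffer always contains an item that occurs in $I$, writing $h$ for the number of steps at which $\alg$ holds $z^*$, every step contributes $1/n$ to $\alg(I)$ except these $h$ steps which contribute $k/n$, so $\alg(I)=1+\tfrac{(k-1)h}{n}$. By the three observations above $h\le 1+0+(k-1)=k$, hence $\alg(I)\le 1+k^2/n\le 2$ because $k^2\le n$. On the other hand $\opt$ may buffer $z^*$ from its first occurrence (at some probe step $\le k$) to the end, giving $\opt(I)\ge(n-k+1)\tfrac{k}{n}\ge k-1\ge\sqrt n-2$. Therefore $\opt(I)/\alg(I)\ge(\sqrt n-2)/2=\tfrac{\sqrt n}{2}-o(\sqrt n)$. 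Since such an $I$ exists for every large $n$, any $g(n)$ for which $\alg$ is $g(n)$-competitive must satisfy $g(n)+o(g(n))\ge\tfrac{\sqrt n}{2}-o(\sqrt n)$, forcing $\liminf_n g(n)/(\sqrt n/2)\ge 1$; in particular $\alg$'s competitive function is not better than $\tfrac{\sqrt n}{2}$.

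The only real obstacle is making the target genuinely unusable by $\alg$, and this forces the two design decisions above. We must exclude $z^*=s_k$ from the pigeonhole, for otherwise $\alg$ could keep $z^*$ throughout the entire filler phase and collect $\Theta(\sqrt n)$ from it, matching $\opt$; and we must pump $z^*$ only up to count $\sqrt n$, so that even if $\alg$ grabs $z^*$ for all of the pump phase its total gain from $z^*$ is still only $O(1)$. Getting the leading constant to be exactly $\tfrac12$ then reduces to the elementary estimates $k^2\le n$ and $(n-k+1)k/n\ge k-1$.
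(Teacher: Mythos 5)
Your proof is correct and follows essentially the same strategy as the paper's: an adaptive adversary presents a long block of distinct items, selects a target that the deterministic algorithm has effectively discarded, and then pumps that target to frequency $\Theta(1/\sqrt{n})$ at the end, forcing $\alg(I)=O(1)$ while $\opt(I)=\Theta(\sqrt{n})$. The only difference is cosmetic: the paper needs just one binary adaptive decision (choosing $x\in\{a_1,a_2\}$ based on the buffer after step~2, so the target is held at most $\sqrt{n}+1$ times), whereas your pigeonhole over the first $\lfloor\sqrt{n}\rfloor$ probe items is a slightly heavier way to reach the same conclusion.
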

\begin{proof}
Consider any deterministic algorithm $\mathcal{A}$, and input of the form
$$I_n = a_1, a_2, \ldots a_{n-\sqrt{n}}, x, x, \ldots, x$$ where the
first $n-\sqrt{n}$ items are distinct and the last $\sqrt{n}$
items are identical.
Since \algA is deterministic, an adversary will know whether $a_1$
or $a_2$ is in the buffer upon completion of time step~2.
The value of $x$ is based on this.
If it is $a_2$, then the adversary sets $x=a_1$, and if it is $a_1$, then
it sets $x=a_2$.
As $x$ does not occur among the next $n-\sqrt{n}-2$ items,
$\mathcal{A}$ has no chance of bringing $x$ into its buffer until
the last $\sqrt{n}$ items arrive, so it
stores $x$ in its buffer at most $\sqrt{n}+1$ times.
$\opt$ stores $x$ at least $n-1$ times.
That gives the ratio of
\begin{eqnarray}
 \frac{\opt (I_n)}{\mathcal{A}(I_n)} &\geq& \frac{\frac{1}{n} + (n-1)\frac{\sqrt{n} + 1}{n}}{(n-\sqrt{n} - 1)\frac{1}{n} +
(\sqrt{n} + 1)\frac{\sqrt{n} + 1}{n}} \nonumber
\\
&=& \frac{1 + (n-1)(\sqrt{n} +1)}{n- \sqrt{n} -1 + (\sqrt{n} + 1)^2} \nonumber
\\
&=& \frac{n + \sqrt{n} -1}{2\sqrt{n} + 1} \nonumber
\\
&\geq& \frac{\sqrt{n}}{2}, \nonumber \mbox{    for $n\geq 4$}
\end{eqnarray}
\end{proof}

In \cite{Giannakopoulos12}, Giannakopoulos et~al.\ proved that
for all sequences $I$ of length $n$, $\opt(I)\leq \sqrt{n}\cdot\nav(I)$.
Here we give a tighter result for \nav.
\begin{theorem}\label{thm:comp_nav}
\nav has competitive function $\frac{\sqrt{n}}{2}$. It is an 
optimal deterministic online algorithm for the frequent items problem.
\end{theorem}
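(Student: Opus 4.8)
(1) an upper bound showing $\opt(I) \leq (\frac{\sqrt{n}}{2} + o(\sqrt{n}))\cdot\nav(I)$ for every input $I$, which shows $\nav$ is $\frac{\sqrt{n}}{2}$-competitive; and (2) a matching lower bound instance showing no function $g(n)$ with $\lim \frac{g(n)}{\sqrt{n}/2} < 1$ can be a competitive function for $\nav$, so that $\frac{\sqrt{n}}{2}$ is the competitive function (not just an upper bound). Optimality among all deterministic algorithms is then immediate from Theorem~\ref{thm:comp_all}, which already says no deterministic algorithm beats $\frac{\sqrt{n}}{2}$, combined with (1).

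For the upper bound, the key observation is that $\nav$ stores $a_t$ at time $t$, so $\nav(I) = \sum_{t=1}^n f_I(a_t) = \sum_{a \in \mathcal{U}} n_I(a) f_I(a) = \sum_a \frac{n_I(a)^2}{n}$, i.e., $\nav(I) = \frac{1}{n}\sum_a n_I(a)^2$. Meanwhile $\opt(I) \leq n \cdot \max_a f_I(a) = \max_a n_I(a)$, since $\opt$ can do no better than holding the single most frequent item the whole time. So I need to bound $\frac{\max_a n_I(a)}{\frac{1}{n}\sum_a n_I(a)^2} = \frac{n \cdot \max_a n_I(a)}{\sum_a n_I(a)^2}$. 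Writing $m = \max_a n_I(a)$ and noting $\sum_a n_I(a) = n$ and $\sum_a n_I(a)^2 \geq m^2$ (just the top term) while also $\sum_a n_I(a)^2 \geq \frac{(\sum_a n_I(a))^2}{(\#\text{distinct items})} \geq \frac{n^2}{\lceil n/m\rceil \cdot \text{something}}$ — the cleanest route is: $\sum_a n_I(a)^2 \geq m^2 + (n - m)\cdot 1$ is too weak; instead use that among the remaining items the sum of counts is $n-m$, and to minimize $\sum n_I(a)^2$ the adversary spreads them out, but each count is at least $1$, giving $\sum_a n_I(a)^2 \geq m^2 + (n-m)$. Then the ratio is at most $\frac{nm}{m^2 + n - m}$, and maximizing over $m \in [1,n]$ by calculus gives the maximum near $m \approx \sqrt{n}$, yielding ratio $\approx \frac{n\sqrt{n}}{n + n - \sqrt{n}} = \frac{\sqrt{n}}{2 - 1/\sqrt{n}} = \frac{\sqrt{n}}{2} + o(\sqrt{n})$. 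This matches the equivalence-class remark in the paper about $\frac{\sqrt{n}}{2}$ versus $\frac{\sqrt{n}}{2 - 1/\sqrt{n}}$, which is a strong hint this is the intended argument.

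For the lower bound (tightness), I would reuse the sequence $I_n = a_1,\dots,a_{n-\sqrt{n}}, x, x, \dots, x$ from the proof of Theorem~\ref{thm:comp_all}, but now with $x$ simply being a brand-new item (or $a_1$) — since $\nav$ is a fixed algorithm I don't need the adversary argument, I just need one bad family. On this sequence $\nav$ stores each distinct early item once (contributing $\frac{1}{n}$ each) and stores $x$ during its $\sqrt{n}$ repetitions (contributing $\frac{\sqrt{n}}{n}$ each, since $x$ occurs $\sqrt{n}$ times; if $x$ equals some $a_i$ then $\sqrt n + 1$ times), giving $\nav(I_n) = (n - \sqrt{n})\frac{1}{n} + \sqrt{n}\cdot\frac{\sqrt{n}}{n} = 2 - \frac{1}{\sqrt{n}}$ up to lower-order terms. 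And $\opt(I_n) \geq (n - \sqrt{n})\cdot\frac{\sqrt n}{n} + (\text{holding }x\text{ onward})$, more carefully $\opt$ holds $x$ from step $2$ onward getting roughly $n\cdot\frac{\sqrt n}{n} = \sqrt n$, so $\frac{\opt(I_n)}{\nav(I_n)} \geq \frac{\sqrt{n} - o(\sqrt n)}{2 - 1/\sqrt{n}} = \frac{\sqrt{n}}{2} - o(\sqrt n)$. Hence no competitive function asymptotically smaller than $\frac{\sqrt{n}}{2}$ works, so $\frac{\sqrt{n}}{2}$ is exactly the competitive function.

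**The main obstacle** will be the upper bound's optimization step: getting the inequality $\sum_a n_I(a)^2 \geq m^2 + (n - m)$ and then carefully maximizing $\frac{nm}{m^2 + n - m}$ over the integer range $1 \leq m \leq n$ to confirm the maximum is $\frac{\sqrt n}{2} + o(\sqrt n)$ and not something with a different leading constant — in particular checking the boundary cases $m = n$ (ratio $1$) and $m$ small (ratio near $1$) so that the interior maximum near $\sqrt n$ genuinely dominates. A secondary subtlety is justifying $\opt(I) \leq \max_a n_I(a)$ rigorously: at every one of the $n$ time steps the buffered item has frequency at most $\max_a f_I(a)$, so summing gives the bound — this is routine but should be stated. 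Everything else is bookkeeping with the $o(\cdot)$ notation as defined in Definition~\ref{def:com}.
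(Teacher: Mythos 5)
Your upper bound is exactly the paper's argument in different clothing: writing $m=\max_a n_I(a)=nf$, your bound $\nav(I)=\frac{1}{n}\sum_a n_I(a)^2\ge\frac{m^2+(n-m)}{n}=nf^2+1-f$ and $\opt(I)\le nf$ give the same ratio $\frac{nf}{nf^2+1-f}$ that the paper maximizes at $f=1/\sqrt{n}$ to get $\frac{\sqrt{n}}{2-1/\sqrt{n}}=\frac{\sqrt{n}}{2}+o(\sqrt{n})$, and like the paper you get optimality (and, for a deterministic algorithm, tightness of the competitive function) directly from Theorem~\ref{thm:comp_all}. (Incidentally, the bound you call ``too weak'' and the one you then use are the same inequality $\sum_a n_I(a)^2\ge m^2+(n-m)$; it suffices.)

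The one genuine error is in your added lower-bound family, in the version where $x$ is a brand-new item. In this problem even \opt can only place the item currently being processed into the buffer (the paper stresses this in the proof of Theorem~\ref{thm:relwor_nav_opt}: \opt ``cannot store an item before it first appears''), so on $a_1,\ldots,a_{n-\sqrt{n}},x,\ldots,x$ with $x$ fresh, \opt cannot ``hold $x$ from step $2$ onward''; during the first $n-\sqrt{n}$ steps every available item has frequency $\frac{1}{n}$, whence $\opt(I_n)\le(n-\sqrt{n})\frac{1}{n}+\sqrt{n}\cdot\frac{\sqrt{n}}{n}=2-\frac{1}{\sqrt{n}}=\nav(I_n)$ and the ratio is $1$, not $\frac{\sqrt{n}}{2}$. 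Your parenthetical alternative $x=a_1$ does work: then \opt buffers $a_1$ at step $1$ and keeps it, giving $\opt(I_n)\ge\sqrt{n}+1$ against $\nav(I_n)=2+\frac{1}{\sqrt{n}}$. That said, this whole lower-bound family is redundant: since \nav is deterministic, Theorem~\ref{thm:comp_all} already rules out any asymptotically smaller competitive function, which is precisely how the paper closes the argument.
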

\begin{proof}
Let $f$ be the frequency of the most frequent item in the input sequence $I$.
Since the lowest possible frequency of an item is $\frac{1}{n}$,
\[
\nav(I) \geq nf^2 + (n-nf)\frac{1}{n}
\mbox{ and }
\opt(I) \leq nf
\]
Thus,
\begin{eqnarray}
\frac{\opt(I)}{\nav(I)} &\leq& \frac{nf}{nf^2 + 1 -f}  \label{eq:comp_nav}
\end{eqnarray}
The right hand side of Ineq.~\ref{eq:comp_nav} reaches its maximum when $f=\frac{1}{\sqrt{n}}$.
Substituting this value into Ineq.~\ref{eq:comp_nav}, we get the result:
\begin{eqnarray}
\frac{\opt(I)}{\nav(I)} &\leq& \frac{\sqrt{n}}{2-1/\sqrt{n}}  = \frac{\sqrt{n}}{2} + \frac{1}{2(2-1/\sqrt{n})} %
\nonumber
\end{eqnarray}
Thus, \nav is a $\frac{\sqrt{n}}{2}$-competitive algorithm and,
by Theorem~\ref{thm:comp_all}, it is optimal.

\end{proof}

For \maj Giannakopoulos et~al.~\cite{Giannakopoulos12} proved a competitive ratio of $\varTheta(n)$.
We give the asymptotically tight bounds, including the multiplicative factor.
\begin{theorem}\label{thm:comp_maj}
\maj has competitive function $\frac{n}{2}$.
\end{theorem}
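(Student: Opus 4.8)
The plan is to establish matching upper and lower bounds on the worst-case ratio $\opt(I)/\maj(I)$, showing both are asymptotically $\tfrac{n}{2}$. The lower bound is immediate from the family $W_n$ already analyzed in Proposition~\ref{prop_sequences}: there $\maj(W_n) = 1$ while $\opt(W_n) = \tfrac{n}{2} - \tfrac12 + O(1/n)$ for even $n$, which gives a ratio of $\tfrac{n}{2} - O(1)$, so no function smaller than $\tfrac{n}{2}$ can be a competitive function for \maj. Hence it remains only to prove the upper bound: for every input $I$, $\opt(I) \leq (\tfrac{n}{2} + o(n)) \cdot \maj(I)$.

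For the upper bound, the key observation is that $\maj(I) \geq 1$ always. To see this, note that \maj holds \emph{some} item in its buffer at every time step $t \geq 1$ (the counter is set to one as soon as it hits zero, on the same item just read), and every item has frequency at least $\tfrac{1}{n}$; summing over $n$ time steps gives $\maj(I) \geq n \cdot \tfrac1n = 1$. Actually one can be slightly more careful and use Lemma~\ref{lem:wmaj}, but for the competitive-function upper bound the trivial bound $\maj(I) \geq 1$ suffices. On the other side, $\opt(I) \leq n$ trivially, since the objective is a sum of $n$ frequencies each at most $1$. Combining, $\opt(I)/\maj(I) \leq n$, which is only a factor-of-two off; to get the constant $\tfrac12$ right I would refine one of the two estimates.

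The refinement I would use is this: let $f$ be the maximum frequency of any item in $I$. If $f \leq \tfrac12$, then $\opt(I) \leq nf \leq \tfrac n2 \leq \tfrac n2 \cdot \maj(I)$ and we are done. If $f > \tfrac12$, then there is a strict majority item $m$; a standard fact about the majority/Boyer--Moore counter is that whenever a strict majority element exists, it is the element left in the buffer at the end, and moreover once the "surplus" of $m$ over all other items becomes positive it stays positive, so $m$ occupies the buffer for the last $\Theta(n_I(m) - (n - n_I(m))) = \Theta((2f-1)n)$ steps at least. This forces $\maj(I) = \Omega((2f-1)n \cdot f)$, and together with $\opt(I) \leq nf$ one gets $\opt(I)/\maj(I) = O\!\left(\tfrac{1}{2f-1}\right)$, which is $O(1)$ bounded away from the regime $f \to \tfrac12$. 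The delicate part is the transition region $f = \tfrac12 + \Theta(1/\sqrt n)$ or so, where neither the "$f$ small" nor the "$f$ large" argument alone gives $\tfrac n2$; there one needs to track the counter trajectory quantitatively and argue that even a thin majority is buffered enough times that $\maj(I)$ times $\tfrac n2$ still dominates $nf$.

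The main obstacle, then, is making the second case precise enough to land exactly on the constant $\tfrac12$ rather than some larger constant: this amounts to a careful worst-case analysis of how many time steps the \maj counter spends on low-frequency items versus the eventual majority item, as a function of the frequency profile $D(I)$, and checking that the extremal profile is essentially the $W_n$-type sequence that already achieves ratio $\tfrac n2$. I would structure the final argument by case analysis on $f$ relative to a threshold like $\tfrac12 + n^{-1/3}$, bounding $\maj(I)$ from below by a direct counter argument in the majority case and by $\maj(I)\geq 1$ combined with $\opt(I)\leq nf \leq \tfrac n2 + o(n)$ in the minority case, so that in every case $\opt(I) \leq (\tfrac n2 + o(n))\maj(I)$.
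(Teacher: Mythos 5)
Your overall plan coincides with the paper's proof: the lower bound via $W_n$ is identical, and the upper bound uses the same case split on the maximum frequency $f$ relative to $\tfrac{1}{2}$, with $\opt(I)\le nf$ and $\maj(I)\ge 1$ when $f$ is small, and a lower bound on how often the majority item is buffered when $f>\tfrac{1}{2}$. Two points need attention. First, the ``standard fact'' you invoke is misstated: the surplus of the majority item $m$ does \emph{not} stay positive once it becomes positive (consider $m,b_1,b_2,m,m$), and $m$ need not occupy the buffer in a single terminal block. What is true, and what the paper proves in two lines, is the total-count statement: in every maximal interval during which $m$ is not buffered, each arrival of $m$ consumes a counter unit created by a non-$m$ arrival inside that interval, so the interval contains at least as many non-$m$ arrivals as $m$ arrivals; summing, the total length of such intervals is at most $2(n-fn)$, hence $m$ is buffered at least $2fn-n$ times, contributing at least $(2fn-n)f$. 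You would need to replace your surplus argument by this pairing argument (or by Lemma~\ref{lem:wmaj}); with that substitution your bound $\maj(I)=\Omega((2f-1)nf)$ is sound.

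Second, the ``delicate transition region'' you flag as the main obstacle is not actually one, and your own threshold device already closes it: for $f\le\tfrac{1}{2}+n^{-1/3}$ the trivial bounds give $\opt(I)\le nf\le\tfrac{n}{2}+n^{2/3}=(\tfrac{n}{2}+o(n))\maj(I)$, while for larger $f$ the buffering bound gives ratio $O(n^{1/3})$, far below $\tfrac{n}{2}$. The paper dispenses with the threshold entirely: using the exact lower bound $\maj(I)\ge 2nf^2-nf+2-2f$ it checks the single inequality $2nf^2-(n+4)f+2\ge 0$ for all $f\ge\tfrac{1}{2}$ and $n\ge 4$, which yields $\opt(I)\le\frac{n}{2}\maj(I)$ uniformly, so no quantitative tracking of the counter trajectory near $f=\tfrac{1}{2}$ is ever needed. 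With the misstated lemma repaired, your argument is correct and is essentially the paper's, just with an unnecessary extra case distinction.
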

\begin{proof}
For the lower bound, consider the family of sequences, $W_n$, from
Definition~\ref{sequences}.
By Proposition~\ref{prop_sequences}, $\maj(W_n) = 1$,
and
$$\opt(W_n) = \left\{ \begin{array}{ll}
\frac{n}{2} - \frac{1}{2} + \frac{1}{n} & \mbox{for even $n$}\\[1ex]
\frac{n}{2} -1 + \frac{3}{2n}  & \mbox{for odd $n$}
\end{array} \right.$$
Consequently,
$ \opt(W_n) \geq \frac{n}{2}\maj(W_n) - 1.$
Thus, the competitive function cannot be better than $\frac{n}{2}$.

For the upper bound, let $f$ be the largest frequency of any item
in some input sequence $I$ of length $n$.
\opt cannot have an aggregate frequency larger than $nf$.

If $f \leq \frac{1}{2}$, then, since no algorithm can have an
aggregate frequency
less than one in total, $\frac{\opt(I)}{\maj(I)}\leq nf\leq \frac{n}{2}$.

It remains to consider the range $\frac{1}{2} < f \leq 1$.
Let $a_0$ denote the most frequent item in $I$.
Note that $a_0$ must be in the buffer at some point since $f>\frac{1}{2}$.

Since there are $n-fn$ items different from $a_0$, the total
length of all subsequences where $a_0$ is not in the buffer
is at most $2(n-fn)$.
This means that $a_0$ {\em is} in the buffer at least
$n - 2(n-fn) = 2fn - n$ times, collecting
at least $(2fn - n)f=2nf^2-nf$. The remaining items
collect at least $2(n-fn)\frac{1}{n}$. In total, this
amounts to $2nf^2-nf+2-2f$.
If we can prove that this quantity is at least $2f$ for large
$n$,
then asymptotically,
$\frac{\opt(I)}{\maj(I)}\leq \frac{nf}{2nf^2-nf+2-2f}
 \leq\frac{nf}{2f}=\frac{n}{2}$
and we will be done.
Now,
$ 2nf^2-nf+2-2f \geq 2f $ if and only if
$ 2nf^2-(n+4)f+2\geq 0 $. Taking the derivative of the left
side shows that the left side is an increasing function of $f$
for $n\geq 4$ and $f\geq \frac{1}{2}$. Thus, $\opt(I) \leq \frac{n}{2}\maj(I)$ holds
for all $f$ and all $n\geq 4$. This implies that \maj is $\frac{n}{2}$-competitive and, combined with the lower bound result, that the competitive function of \maj is $\frac{n}{2}$.
\end{proof}

\begin{theorem}\label{thm:comp_eag}
The competitive function of the algorithm \eag is $\frac{n}{2}$.
\end{theorem}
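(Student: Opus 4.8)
The plan is to prove matching lower and upper bounds, following the template of the proofs of Theorems~\ref{thm:comp_nav} and~\ref{thm:comp_maj}. For the lower bound I would reuse the family $E_n$ from Definition~\ref{sequences}: by Proposition~\ref{prop_sequences}, $\eag(E_n)=2$ while $\opt(E_n)=\nav(E_n)=n-4+\frac{8}{n}$, so $\opt(E_n)/\eag(E_n)=\frac{n}{2}-2+\frac{4}{n}$. Plugging this into Definition~\ref{def:com} shows that if \eag is $g(n)$-competitive then $g(n)+o(g(n))\ge \frac{n}{2}-2+\frac{4}{n}$, which forces $\lim_{n\to\infty}\frac{n/2}{g(n)}\le 1$; hence the competitive function of \eag cannot be smaller than $\frac{n}{2}$.

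For the upper bound I would split on whether \eag ever locks onto a repeated item. If the input $I$ contains no repeated item, then $\eag(I)=\nav(I)$ by Definition~\ref{def:eag}, and Theorem~\ref{thm:comp_nav} already gives $\opt(I)\le \frac{\sqrt{n}}{2-1/\sqrt{n}}\,\eag(I)\le \frac{n}{2}\eag(I)$ for $n\ge 4$ — much stronger than needed. In the remaining case let $m=t^*$ and $c=a_m=a_{m+1}$, so that $c$ occupies \eag's buffer at all $n-m+1$ time steps $m,m+1,\dots,n$, giving $\eag(I)\ge (n-m+1)f_I(c)$, while on the prefix \eag mimics \nav, giving $\eag(I)\ge\sum_{t=1}^{m}f_I(a_t)$. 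Writing $f$ for the largest item frequency and $n_0=nf$ for the corresponding (integer) count, I would argue: if $n-m+1\ge n_0$, then since $f_I(c)\ge\frac{2}{n}$ we get $\eag(I)\ge (n-m+1)\frac{2}{n}\ge 2f$; and if $n-m+1<n_0$, then the suffix $a_{m+1},\dots,a_n$, of length $n-m\le n_0-2$, can contain at most $n_0-2$ occurrences of the most frequent item, so at least two of its occurrences lie in the prefix, whence $\eag(I)\ge\sum_{t=1}^{m}f_I(a_t)\ge 2f$. Either way $\eag(I)\ge 2f$, and combined with $\opt(I)\le nf$ this yields $\opt(I)/\eag(I)\le n/2$. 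Together the two cases show \eag is $\frac{n}{2}$-competitive, and with the lower bound, that its competitive function is exactly $\frac{n}{2}$.

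The main obstacle, and the step deserving the most care, is this last case: obtaining a lower bound on $\eag(I)$ valid no matter where the first repeated item falls. The tension is between $m$ small (cheap, short prefix, but a long suffix in which the buffered item $c$ itself is used $n-m+1$ times) and $m$ large (a long prefix that must then contain almost all occurrences of the globally most frequent item); it is the integrality of $n_0=nf$ that makes the two one-sided bounds meet cleanly at the constant $2$. I would also want to verify the easy facts used implicitly — that $\opt(I)\le nf$, that $f_I(c)\ge\frac{2}{n}$ for a repeated item $c$, and that $n_0$ is an integer — and to sanity-check the boundary against $E_n$, where $m=1$, $c=a$, and the estimate $\eag(I)\ge 2f$ is essentially tight.
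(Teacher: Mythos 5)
Your proof is correct, and the lower bound (via $E_n$) and the no-repeat case (reduce to \nav via Theorem~\ref{thm:comp_nav}) coincide with the paper's. Where you genuinely diverge is the main upper-bound case: the paper parameterizes everything by the position $p$ of the first repeat, bounding the maximum frequency by $\frac{n-\frac{p}{2}}{n}$ (since no item can occupy more than about half of the repeat-free prefix), hence $\opt(I)\le n-\frac{p}{2}$, and bounding $\eag(I)\ge \frac{p}{n}+(n-p)\frac{2}{n}=2-\frac{p}{n}$ (each prefix item contributes at least $\frac{1}{n}$), after which the ratio cancels exactly to $\frac{n}{2}$ for every $p$. You instead keep $\opt(I)\le nf$ with $f$ the true maximum frequency and prove the clean standalone inequality $\eag(I)\ge 2f$ by a counting dichotomy: either the locked suffix of length $n-m+1$ is at least $n_0=nf$ long, so the buffered repeated item (frequency $\ge\frac{2}{n}$) already collects $2f$, or else the suffix is too short to hold more than $n_0-2$ occurrences of the most frequent item, forcing at least two of them into the prefix, which \eag serves as \nav does. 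Both arguments are sound; the paper's buys a per-$p$ identity (and implicitly also yields $\eag(I)\ge 2f$, since $2f\le 2-\frac{p}{n}$), while yours isolates a reusable lemma, avoids the "at most half the prefix" observation and the $\frac{1}{n}$-per-prefix-item bound, and leans instead on the integrality of $n_0$ — exactly the point you flag as needing care, and which you handle correctly.
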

\begin{proof}
For the lower bound, consider the family of sequences, $E_n$, from
Definition~\ref{sequences}.
By Proposition~\ref{prop_sequences}, $\eag(E_n) =2$, and $\opt(E_n) = n-4+\frac{8}{n}$.
Thus, $\opt(E_n) = \frac{n}{2}\eag(E_n) -4 + \frac{8}{n}$, and \eag's competitive function cannot be better than $\frac{n}{2}$.

If there are no %
repeated items in $I$, then \eag behaves like \nav and that will give $\opt(I) \leq (\frac{\sqrt{n}}{2}+o(\sqrt{n}))\eag(I)$ by Theorem~\ref{thm:comp_nav}.
It is evident from the lower bound result that the competitive function for \eag  is worse than
$\frac{\sqrt{n}}{2}$, so we assume that there is at least one repeated item in $I$.
Let time steps $p+1$ and $p+2$ be the first  occurrence of a repeated
item in $I$.
Let $b$ be the most frequent item in $I$. Note that $b$ is not necessarily the
item which arrived at time steps $p+1$ and $p+2$.
After $p$, all the items could conceivably be $b$, but among the first $p$ items, at most $\frac{p}{2}$ items can be $b$, because $p+1$ and $p+2$ are the indices of the first repeated item.
So, an upper bound on the maximum frequency, $f_I(b)$, is $\frac{n-p + \frac{p}{2}}{n} = \frac{n-\frac{p}{2}}{n}$.
This gives an upper bound of $\opt(I) \leq n\frac{n-\frac{p}{2}}{n} = n-\frac{p}{2}$.

Now we  consider a lower bound on $\eag(I)$.
In the worst case for \eag, all the items before $p+1$ are
distinct, so their contribution to $\eag(I)$ is at least $\frac{p}{n}$.
In the worst case for \eag, the item that occurs at
time steps $p+1$ and $p+2$ has frequency $\frac{2}{n}$, so the contribution
to $\eag(I)$ from the items after $p$ is at least  $(n-p)\frac{2}{n}$.
Thus, $\eag(I) \geq \frac{p}{n} + (n-p)\frac{2}{n} = 2 - \frac{p}{n}$,
and $$\frac{\opt(I)}{\eag(I)} \leq \frac{n-\frac{p}{2}}{2-\frac{p}{n}}=\frac{n}{2}.$$
Hence, \eag has competitive function $\frac{n}{2}$.
\end{proof}

\section{Relative Interval Analysis}\label{sec:realint}
Dorrigiv et~al.~\cite{Dorrigiv09} proposed another analysis method, relative interval analysis, in the context of paging.
Relative interval analysis compares two online algorithms directly, i.e., it does not use the optimal offline algorithm as the baseline of the comparison.
It compares two algorithms on the basis of the rate of the outcomes over the length of the input sequence rather than their worst case behavior.
Here we define this analysis for maximization problems for two algorithms $\mathcal{A}$ and $\mathcal{B}$, following~\cite{Dorrigiv09}.
\begin{definition}\label{def:relint}
Define
\[\Min_{\mathcal{A}, \mathcal{B}}(n) = \min_{|I| = n} \left \lbrace \mathcal{A}(I) - \mathcal{B}(I)\right\rbrace
\mbox{~and~}
\Max_{\mathcal{A}, \mathcal{B}}(n) = \max_{|I| = n} \left \lbrace \mathcal{A}(I) - \mathcal{B}(I)\right\rbrace,\]
and 
\[\Min(\mathcal{A}, \mathcal{B}) = \liminf _{n\to \infty} \frac{\Min_{\mathcal{A}, \mathcal{B}}(n)}{n} \mbox{ and } \Max(\mathcal{A}, \mathcal{B}) = \limsup _{n\to \infty} \frac{\Max_{\mathcal{A}, \mathcal{B}}(n)}{n}.\]
The {\em relative interval} of $\mathcal{A}$ and $\mathcal{B}$ is defined as
$$l(\mathcal{A}, \mathcal{B}) = \left[ \Min(\mathcal{A}, \mathcal{B}), \Max(\mathcal{A}, \mathcal{B}) \right].$$
If $\Max(\mathcal{A}, \mathcal{B}) > |\Min(\mathcal{A}, \mathcal{B})| $, then $\mathcal{A}$ is said to have {\em better performance} than $\mathcal{B}$ in this model.
\end{definition}
Note that $\Min(\mathcal{A}, \mathcal{B}) = - \Max(\mathcal{B}, \mathcal{A})$ and $\Max(\mathcal{A}, \mathcal{B}) = -\Min(\mathcal{B}, \mathcal{A})$.

For any pair of algorithms, \algA and \algB, for the frequent items
problem, there is a trivial upper bound on $\Max(\algA,\algB)$ and
lower bound on $\Min(\algA,\algB)$.
\begin{proposition}\label{trivialMax}
For any pair of algorithms \algA~ and \algB, $\Max(\algA,\algB)\leq 1$
and $\Min(\algA,\algB)\geq -1$.
\end{proposition}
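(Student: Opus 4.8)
The plan is to observe that the aggregate frequency of any algorithm on any input of length $n$ always lies in the interval $[0,n]$, and then to push this crude bound through the definitions of $\Max$ and $\Min$.

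First I would note that for any algorithm $\algA$ and any input $I$ with $|I|=n$, every summand in $F_{\algA}(I)=\sum_{t=1}^{n} f_I(s^{\algA}_t)$ satisfies $0\le f_I(s^{\algA}_t)\le 1$, since any frequency has the form $n_I(a)/n$ with $0\le n_I(a)\le n$. Summing over the $n$ time steps gives $0\le \algA(I)\le n$, and identically $0\le \algB(I)\le n$. Hence, for every $I$ of length $n$, $-n \le \algA(I)-\algB(I) \le n$, so by definition $\Max_{\algA,\algB}(n)\le n$ and $\Min_{\algA,\algB}(n)\ge -n$.

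Dividing by $n$ then yields $\frac{\Max_{\algA,\algB}(n)}{n}\le 1$ and $\frac{\Min_{\algA,\algB}(n)}{n}\ge -1$ for every $n$. Since a pointwise upper bound of $1$ is preserved under $\limsup$ and a pointwise lower bound of $-1$ is preserved under $\liminf$, we conclude $\Max(\algA,\algB)\le 1$ and $\Min(\algA,\algB)\ge -1$, as claimed.

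There is essentially no obstacle here; the only thing to be careful about is not accidentally relying on a stronger lower bound for $\algA(I)$ or $\algB(I)$ than the problem guarantees. Because we only invoke the trivial bounds $0$ and $n$, the argument covers even a degenerate algorithm whose buffer is sometimes empty (so that some term $f_I(s^{\algA}_t)$ is $0$). One could sharpen this — every buffered item occurs at least once, so $f_I(s^{\algA}_t)\ge \tfrac1n$ whenever the buffer is nonempty, giving $\algA(I)\ge 1$ and hence the slightly stronger $\Max_{\algA,\algB}(n)\le n-1$ — but this refinement is unnecessary for the stated proposition.
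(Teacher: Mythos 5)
Your proof is correct and follows essentially the same route as the paper: bound the aggregate frequency of any algorithm on any length-$n$ input by trivial extremes, deduce a uniform bound on $\algA(I)-\algB(I)$, and normalize by $n$ before taking $\limsup$/$\liminf$. The paper uses the slightly tighter range $[1,n]$ (every buffered item has frequency at least $\tfrac{1}{n}$), giving a difference of at most $n-1$, but as you note this refinement changes nothing in the limit.
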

\begin{proof}
The maximum aggregate frequency any algorithm could
have is for a sequence where all items are identical, giving the value $n$.
The minimum is for a sequence where all items are different, giving the
value $1$. The required bounds follow since $\limsup_{n\rightarrow\infty} \frac{n-1}{n} = 1$.
\end{proof}

\subsection{Naive vs.\ Eager}
According to relative interval analysis, \nav has better performance than \eag.
\begin{theorem}\label{thm:relint_nav_eag}
 According to relative interval analysis $l(\nav, \eag) = [-\frac{1}{4},1]$.
\end{theorem}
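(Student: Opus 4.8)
\noindent\emph{Proof plan.}
The plan is to pin down the two endpoints of the interval separately.
The right endpoint is easy: $\Max(\nav,\eag)\le 1$ is immediate from Proposition~\ref{trivialMax}, and the family $E_n$ from Definition~\ref{sequences} already supplies the matching lower bound, since by Proposition~\ref{prop_sequences} we have $\nav(E_n)-\eag(E_n)=\bigl(n-4+\tfrac{8}{n}\bigr)-2=n-6+\tfrac{8}{n}$, so $\Max_{\nav,\eag}(n)/n\to 1$ and hence $\Max(\nav,\eag)=1$.

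For the left endpoint I would first exhibit a witnessing family. Take $I$ to consist of $j=\lfloor (n+1)/2\rfloor$ copies of a single item followed by $n-j$ pairwise distinct new items; then, by Definition~\ref{def:eag}, $\eag$ locks onto that item already at $t^{\ast}=1$, so $\eag(I)=j$, while $\nav(I)=\bigl(j^{2}+(n-j)\bigr)/n$. A one‑line computation gives $\eag(I)-\nav(I)=\tfrac{n}{4}-O(1)$, hence $\nav(I)-\eag(I)=-\tfrac{n}{4}+O(1)$ and therefore $\Min(\nav,\eag)\le -\tfrac14$.

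The substantive part is the matching bound $\Min(\nav,\eag)\ge -\tfrac14$, i.e.\ $\eag(I)-\nav(I)\le \tfrac{n}{4}+o(n)$ for every $I$ (the case of no repeated item is trivial, since then $\eag=\nav$). Here $\eag$ and $\nav$ agree on all time steps $t\le t^{\ast}$, and for $t>t^{\ast}$ algorithm $\eag$ holds the repeated item $r:=a_{t^{\ast}}$ while $\nav$ holds $a_t$, so $\eag(I)-\nav(I)=\sum_{t>t^{\ast}}\bigl(f_I(r)-f_I(a_t)\bigr)\le k\,f_I(r)$, where $k$ is the number of indices $t>t^{\ast}$ with $a_t\ne r$. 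The key observation is that minimality of $t^{\ast}$ forces $a_1,\dots,a_{t^{\ast}}$ to have no two equal consecutive entries, so $r$ occurs at most $\lceil t^{\ast}/2\rceil$ times in that prefix; since $r$ occupies exactly $m-k$ of the $m:=n-t^{\ast}$ suffix positions, $n_I(r)\le \lceil t^{\ast}/2\rceil+(m-k)$. Substituting $t^{\ast}=n-m$ gives $k\,f_I(r)\le \tfrac{k}{n}\bigl(\tfrac{n+m+1}{2}-k\bigr)$; this is increasing in $m\le n$, so set $m=n$, after which it is a downward parabola in $k$ maximized near $k=n/2$, yielding $k\,f_I(r)\le \tfrac{n}{4}+O(1)$. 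This gives $\eag(I)-\nav(I)\le \tfrac{n}{4}+O(1)$ and hence $\Min(\nav,\eag)\ge -\tfrac14$, completing the argument.

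The step I expect to be the main obstacle is this last optimization: getting the exact constant $\tfrac14$ out depends on the sharp count $n_I(r)\le \lceil t^{\ast}/2\rceil+(m-k)$, which in turn hinges on using the minimality of $t^{\ast}$ correctly. A cruder estimate such as $f_I(r)\le 1$ discards the $-\sum_t f_I(a_t)$ contribution and collapses to the trivial bound $1$; one also has to check that dropping the nonnegative terms coming from suffix positions with $a_t=r$ costs only $O(1)$, and the witnessing family above confirms the bound is tight, so no more can be squeezed out.
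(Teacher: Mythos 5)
Your proposal is correct, and for the substantive direction it takes a genuinely different route from the paper. The right endpoint and the witness for $-\frac14$ are essentially the paper's (Proposition~\ref{trivialMax} plus $E_n$, and a sequence with roughly $n/2$ copies of one item in front of distinct singletons), but for the bound $\Min(\nav,\eag)\ge -\frac14$ the paper argues via an extremal sequence: it assumes $I$ minimizes $\nav(I)-\eag(I)$, performs exchange/perturbation steps (replacing any duplicated item other than the first repeated item $a$ by fresh singletons, and moving occurrences of $a$) to reduce to a two-parameter family, and then minimizes the resulting explicit expression by calculus, obtaining the exact per-$n$ minimum $-\frac n4+\frac12-\frac1{4n}$. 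You instead bound every sequence directly: writing $\eag(I)-\nav(I)=\sum_{t>t^{\ast}}\bigl(f_I(a_{t^{\ast}})-f_I(a_t)\bigr)\le k\,f_I(r)$, using the minimality of $t^{\ast}$ (no two equal consecutive items in the prefix) to get the sharp count $n_I(r)\le\lceil t^{\ast}/2\rceil+(m-k)$, and then maximizing the resulting quadratic in $k$ after relaxing $m$ to $n$, which yields $\frac n4+O(1)$. Your argument is more elementary and uniform --- it avoids both the appeal to an extremal sequence and the case-by-case replacement reasoning, at the cost of a slightly looser additive constant --- while the paper's exchange argument pins down the exact minimum for each $n$, which matches its witness family exactly but is more than the asymptotic statement $\Min(\nav,\eag)=-\frac14$ requires. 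The steps you flagged as delicate (dropping the $-f_I(a_t)$ terms at cost $O(1)$, and the prefix count) check out, so the constant $\frac14$ is obtained correctly.
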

\begin{proof}
By Proposition~\ref{trivialMax}, $\Max(\nav, \eag)\leq 1$.

We now consider a lower bound on $\Max(\nav, \eag)$.
By Proposition~\ref{prop_sequences},
we have that
$ \nav (E_n) - \eag (E_n)  = (n-4+\frac{8}{n})-2$, so
\[
\limsup _{n\to \infty} \frac{\nav (E_n) - \eag (E_n)}{n}
= \limsup _{n\to \infty} \frac{n-6+\frac{8}{n}}{n} = 1.
\]
Thus, $\Max(\nav, \eag)= 1$.

We now consider $\Min(\nav, \eag)$.
For the upper bound on the minimum value of $\nav (I) - \eag (I)$,
let $I$ contain $r=\lceil \frac{n+1}{2} \rceil$ copies of $a$ and $\lfloor
\frac{n-1}{2}\rfloor$ distinct items
 $a_1, a_2, \ldots a_{n-r}$, and let $I$ start with $a a$.
For this sequence, \nav's aggregate frequency is
$\lceil \frac{n+1}{2}\rceil \frac{\lceil\frac{n+1}{2}\rceil }{n} + \lfloor
\frac{n-1}{2}\rfloor \frac{1}{n}$,
which is  $\frac{n}{4} + \frac{3}{2}$ if $n$ is even and
$\frac{n}{4} +1 -\frac{1}{4n}$ if $n$ is odd.
\eag's aggregate frequency is
$n \frac{\lceil\frac{n+1}{2} \rceil}{n}$, which is
$ \frac{n}{2} + 1$ if $n$ is even and $\frac{n+1}{2}$ if $n$ is odd.
This gives an upper bound of $\Min_{\nav,\eag}(n)\leq
\nav (I) - \eag (I)$, which is $-\frac{n}{4} +\frac{1}{2}$ if $n$
is even, and $-\frac{n}{4} +\frac{1}{2}-\frac{1}{4n}$ if $n$ is odd.
Thus,
\begin{equation}
\Min(\nav, \eag)\leq \liminf _{n\to \infty} \frac{\frac{1}{2} - \frac{n}{4}}{n} = -\frac{1}{4}. \nonumber
\end{equation}

Next we calculate a lower bound on $\Min(\nav, \eag)$.
Assume that among sequences of length $n$, $I$ gives the smallest
possible value of $\nav(I)-\eag(I)$.
From the definitions of \nav and \eag, it is evident that there must
be a repeated item if $\nav (I) - \eag (I) <0$.
Suppose the first repeated item is item $a$ at time steps $p+1$ and $p+2$.
Before $p+1$, both \nav and \eag have  the same items in the buffer,
and both \nav and \eag have $a$ in their buffers every time it occurs.
We show that we can assume that all items in $I$ different from
$a$ each occur only once in $I$.

First, suppose that there is an
item $b\not= a$ before $p+1$ with frequency greater than $1$ in $I$.
Replace this occurrence of $b$ by a new item, $b'$, which does not occur
in $I$ to obtain $I'$. The contribution to the aggregate frequency
from $b'$ and any $b$s before $p+1$ is identical for \nav and \eag on $I'$.
The contribution to \eag's aggregate frequency from items after $p$ is
unchanged, but if \nav has any $b$s after $p$, the contribution to \nav's
aggregate frequency from them is lower in $I'$ than in $I$. Thus
$\nav(I')-\eag(I') < \nav(I) - \eag(I)$, contradicting the minimality
for $I$.

Now we can assume that any repeated items other than $a$ occur only after $p$.
Clearly, the same technique of replacing one of these repeated
items by a new item which does not already occur will only affect \nav's
aggregate frequency and only decrease it, contradicting the minimality
of $I$. Thus, we may assume that $a$ is the only repeated item.

We may also assume that the item $a$ does not occur before
time $p+1$, since swapping such an occurrence with the item in
location $p$ has no effect on either \nav's or \eag's aggregate frequency.

Consequently, if the number of occurrences of $a$ is denoted by $n_I(a)$,
then $\nav(I)-\eag(I) = (n-n_I(a))\frac{1}{n} + n_I(a)\frac{n_I(a)}{n} -
(p\frac{1}{n} + (n-p)\frac{n_I(a)}{n})$. Since $n_I(a)>1$, this is
clearly minimized at $p=0$, so the first two occurrences of $a$ are
in the first two locations. Taking the derivative and setting it equal
to zero gives that the minimum occurs when $n_I(a)=\frac{n+1}{2}$. This
gives that $\nav(I)-\eag(I) \geq -\frac{n}{4} +\frac{1}{2}-\frac{1}{4n}$,
and  $\Min(\nav, \eag)=-\frac{1}{4}$.
Thus, $l(\nav, \eag) = [-\frac{1}{4}, 1]$.
\end{proof}

\subsection{Naive vs.\ Majority}

\nav and \maj are equally good according to relative interval analysis.
\begin{theorem}\label{thm:relint_nav_maj}
 According to relative interval analysis $l(\nav, \maj) = [-\frac{1}{4}, \frac{1}{4}]$.
\end{theorem}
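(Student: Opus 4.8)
The plan is to establish the two endpoints of the interval $l(\nav,\maj)$ separately, exploiting the symmetry $\Min(\nav,\maj) = -\Max(\maj,\nav)$, so that the lower-bound endpoint $-\frac14$ follows from showing $\Max(\maj,\nav) = \frac14$, and the upper-bound endpoint $\frac14$ follows from showing $\Max(\nav,\maj) = \frac14$ directly. For the achievability of $\frac14$ on both sides I would use the known families from Definition~\ref{sequences}: the sequence $W_n$ witnesses that $\nav$ beats $\maj$, since by Proposition~\ref{prop_sequences} we have $\nav(W_n) - \maj(W_n) = \frac n4 + \Theta(1) - 1$, giving $\limsup \frac{\nav(W_n)-\maj(W_n)}{n} = \frac14$, hence $\Max(\nav,\maj)\ge\frac14$. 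For the other direction I need a family on which $\maj$ beats $\nav$ by roughly $\frac n4$; the natural candidate is a variant of $E_n$ or a sequence like $a,a,a,\ldots$ interleaved so that $\maj$ locks onto the frequent item while $\nav$ keeps flushing its buffer — I would pick a sequence of the form $b_1,b_1,b_2,b_2,\ldots$ of paired distinct items followed by a long run of $a$'s, tuned so $\maj$'s counter keeps $a$ (or the locally dominant item) buffered for extra steps while $\nav$ only buffers each item at the moment it arrives. The computation will parallel the $W_n$ calculation.

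For the two matching upper bounds — that neither difference can exceed $\frac14$ asymptotically — I would argue structurally about which requests contribute, in the spirit of Lemma~\ref{lem:wmaj} and the $\nav$ vs.\ $\eag$ proof. The key observation is that $\nav(I) = \sum_{a} n_I(a) f_I(a) = \sum_a \frac{n_I(a)^2}{n}$, a fixed quantity depending only on the multiset of items, whereas $\maj(I)$ depends on the order. So $\Max(\nav,\maj)$ and $\Max(\maj,\nav)$ are really statements comparing the order-independent quantity $\sum_a n_I(a)^2/n$ against the best/worst placement of $\maj$. For the bound $\Max(\nav,\maj)\le\frac14$: I would lower-bound $\maj(I)$ using the pairing argument from Lemma~\ref{lem:wmaj} (at least $\lceil n/2\rceil$ requests contribute, and the buffered items can be taken among the most frequent in the best case, but for a worst-case-over-$I$ bound on the difference I instead want: given the multiset, $\nav(I)$ is fixed and $\maj(I)\ge$ something), then maximize $\nav(I)-\maj(I)$ over multisets; the extremal case should be one frequent item of frequency about $\frac12$ plus singletons, mirroring the $\nav$ vs.\ $\eag$ analysis where the optimum sat at $n_I(a)=\frac{n+1}{2}$. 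For the bound $\Max(\maj,\nav)\le\frac14$: here $\maj$ can at best keep the most frequent items buffered twice each (Lemma~\ref{lem:wmaj}'s best-ordering remark), so $\maj(I) \le 2\sum_{i} f_I(a_i')$ over roughly the top half, and I compare this to the fixed $\nav(I)=\sum_a n_I(a)f_I(a)$; again I expect the extremal multiset to be a single item of frequency $\approx\frac12$ plus singletons.

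The main obstacle will be the upper bound $\Max(\maj,\nav)\le\frac14$: unlike the $\nav$-beats-others direction, bounding $\maj(I)$ from above tightly requires care, because $\maj$'s buffer contents are genuinely order-dependent and the "best ordering" contributes the two most frequent-heavy halves with multiplicity two, but I must confirm that no multiset-plus-ordering combination pushes $\maj(I) - \nav(I)$ past $\frac n4 + o(n)$. I would handle this by writing $\maj(I) \le 2\sum_{i=1}^{\lceil n/2\rceil} f_I(a_i')$ (the generous upper bound where the counter keeps each buffered item for an extra step and the buffered items are the heaviest), then noting $\nav(I) = \sum_{i=1}^n f_I(a_i')^2\cdot$(appropriate counts) $= \sum_a \frac{n_I(a)^2}{n} \ge \sum_{i=1}^{\lceil n/2\rceil} f_I(a_i')$ whenever each of those items has frequency $\ge \frac1n$ — wait, that only gives $\maj(I)-\nav(I)\le \sum_{i=1}^{\lceil n/2\rceil} f_I(a_i') \le$ something, which I then maximize subject to $\sum_a n_I(a)=n$; the maximizer, by convexity, again concentrates mass into one item of count $\approx n/2$, yielding the difference $\frac n2\cdot\frac12 - (\text{its }\nav\text{ contribution}) + o(n) = \frac n4 - \frac n4 + \ldots$, so I must track the lower-order terms honestly to land exactly on $\frac14$ rather than something smaller. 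If the clean convexity argument does not immediately give tightness, the fallback is to match the upper bound directly against the explicit extremal family constructed in the first paragraph, which by Definition~\ref{def:com}-style reasoning forces equality.
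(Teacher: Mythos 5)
Your overall skeleton is the same as the paper's: use the symmetry $\Min(\nav,\maj)=-\Max(\maj,\nav)$, take $W_n$ as the witness for $\Max(\nav,\maj)\ge\frac{1}{4}$, and for both upper bounds reduce, via the pairing argument behind Lemma~\ref{lem:wmaj} and its best-ordering counterpart $\maj(I)\le 2\sum_{\text{top }\lceil n/2\rceil\text{ of }D(I)}f_I(a_i')$, to maximizing $\sum_{\text{top half}}f_I(a_i')-\sum_{\text{bottom half}}f_I(a_i')$ over multisets, which is at most $p-\frac{p^2}{n}\le\frac{n}{4}$ where $p$ is the largest multiplicity. The genuine gap is your witness for the other endpoint. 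The family ``paired distinct items $b_1,b_1,b_2,b_2,\ldots$ followed by a long run of $a$'s'' does not make \maj beat \nav by $\Theta(n)$: during the prefix, \maj always has some $b_i$ of frequency $\frac{2}{n}$ in its buffer, exactly matching \nav's gain, and during the run of $a$'s \nav buffers $a$ as well, so $\maj(I)-\nav(I)=O(1)$. The gain for \maj only arises when a high-frequency item is held while \emph{low}-frequency items stream past, so the frequent item's block must come \emph{first}: take $I_n=a_0,\ldots,a_0,a_1,\ldots,a_{\lfloor n/2\rfloor}$ with $\lceil n/2\rceil$ copies of $a_0$ (this is the paper's construction). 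Then \maj's counter is pumped to about $\frac{n}{2}$, so $a_0$ (frequency about $\frac12$) stays buffered through the entire singleton suffix, giving $\maj(I_n)=\lceil\frac{n}{2}\rceil$ while $\nav(I_n)\approx\frac{n}{4}$, hence $\Max(\maj,\nav)\ge\frac14$ and $\Min(\nav,\maj)\le-\frac14$. Without a working witness here you only prove $\Min(\nav,\maj)\ge-\frac14$, not the equality claimed; and since your stated fallback for tightness was precisely this defective family, the fallback collapses with it.

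A smaller but real problem is the arithmetic in your convexity step for $\Max(\maj,\nav)\le\frac14$: with one item of count about $\frac{n}{2}$ plus singletons, the bound is $\maj(I)\le 2\cdot\frac{n}{2}\cdot\frac12\approx\frac{n}{2}$, and $\nav(I)\approx\frac{n}{4}$, so the difference is $\approx\frac{n}{4}$, not ``$\frac{n}{4}-\frac{n}{4}$''; there is no delicate lower-order-term issue, you simply dropped the factor $2$. Once that is fixed (or once you just maximize $p-\frac{p^2}{n}$ at $p\approx\frac{n}{2}$, as the paper does for both directions), your upper-bound argument goes through and matches the paper's; the only missing piece is the corrected witness family above.
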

\begin{proof}
For the maximum value of $\nav (I) - \maj (I)$, it is sufficient to consider the worst permutation of $I$ for \maj since \nav has the same output for all permutations of $I$.
For the worst permutation, $\maj_W(I)$ will buffer only the first $\lceil \frac{n}{2} \rceil$ items of the distribution $D(I)$.
The first $\lfloor \frac{n}{2} \rfloor$ items will be buffered twice and in case of odd $n$, the $\lceil \frac{n}{2} \rceil$th item will be stored once at the last time step.
Let $D(I) = a_1', a_2', a_3',\ldots, a_n'$. Then
\begin{eqnarray}%
\nav(I) - \maj_W(I)  &=& \sum_{i=1}^n f_I(a'_i) - 2\sum_{i=1}^{\lfloor \frac{n}{2} \rfloor}f_I(a'_i) - \left (\left\lceil \frac{n}{2} \right\rceil - \left\lfloor \frac{n}{2} \right\rfloor \right) f_I(a'_{\lceil \frac{n}{2} \rceil}) \nonumber\\
 &=& \sum_{i=\lceil \frac{n+2}{2}\rceil}^n f_I(a'_i) - \sum_{i=1}^{\lfloor\frac{n}{2}\rfloor}f_I(a'_i). \label{eq:dif_navmaj}
\end{eqnarray}
Let $p$ be the number of occurrences of the most frequent item in $I$.  Then
\begin{eqnarray}
\nav(I) - \maj_W(I) &=&
\sum_{i=\lceil \frac{n+2}{2}\rceil}^n f_I(a'_i) - \sum_{i=1}^{\lfloor\frac{n}{2}\rfloor}f_I(a'_i) \nonumber
\\
&\leq& \left\lfloor\frac{n}{2}\right\rfloor\frac{p}{n} - \left(p - \left\lceil\frac{n}{2}\right\rceil\right)\frac{p}{n} \nonumber
\\
&=& p - \frac{p^2}{n}. \nonumber
\end{eqnarray}
If $n$ is even, an upper bound on the maximum difference will be achieved when $p = \frac{n}{2}$, and for odd $n$ when $p = \frac{n+1}{2}$.
This gives an upper bound on the maximum of $\nav(I) - \maj(I)$ of $\frac{n}{4}$ for even $n$ and $\frac{n}{4} - \frac{1}{4n}$ for odd $n$.
For a lower bound on the maximum value of $\nav (I) - \maj (I)$, we consider the family
of sequences, $W_n$, from Definition~\ref{sequences}. By Proposition~\ref{prop_sequences},
for even $n$,  $\nav (W_n) - \maj (W_n) = \frac{n}{4} - \frac{1}{2}$, and for odd $n$, $\nav (W_n) - \maj (W_n) = \frac{n}{4} - 1 + \frac{1}{4n}$.
Thus,
$\Max(\nav, \maj) \geq \limsup _{n\to \infty} \frac{\nav (W_n) - \maj (W_n)}{n} = \frac{1}{4}$,
matching the upper bound.

To derive the minimum value of $\nav (I) - \maj (I)$,
we calculate the maximum value of $\maj (I) - \nav (I)$. For an upper
bound on this,
we consider the best permutation, $I_B$, for \maj of an arbitrary sequence, $I$.
For $I_B$, \maj would buffer the half of the requests in the sequence with 
the highest frequencies.
The difference is
\begin{eqnarray}%
&& \maj(I_B) - \nav(I_B) \nonumber\\
&=& 2  \sum_{i=\lceil \frac{n+2}{2} \rceil}^{n}   f_I(a'_i) + \left(\left\lceil \frac{n}{2} \right\rceil - \left\lfloor \frac{n}{2} \right\rfloor \right) f_I(a'_{\lceil \frac{n}{2} \rceil}) - \sum_{i=1}^n f_I(a'_i)  \nonumber\\
 &=& \sum_{i=\lceil \frac{n+2}{2}\rceil}^n f_I(a'_i) - \sum_{i=1}^{\lfloor\frac{n}{2}\rfloor}f_I(a'_i). \nonumber %
\end{eqnarray}
This expression is exactly the same as the expression for $\nav (I) - \maj_W(I)$ from Eq.~\ref{eq:dif_navmaj}, so we get the same upper bound of $\frac{1}{4}$.
Now, for a lower bound on $\Max(\maj,\nav)$, we use the family of sequences, $I_n$ defined as
$$I_n = a_0,a_0,\ldots,a_0,a_1,a_2,\ldots,a_{\lfloor\frac{n}{2}\rfloor},$$ where there
are $\lceil \frac{n}{2} \rceil$ copies of $a_0$. Then
\[
\nav(I_n) = \left\lfloor\frac{n}{2}\right\rfloor\frac{1}{n} + \left\lceil \frac{n}{2} \right\rceil \frac{\lceil\frac{n}{2}\rceil}{n}
= \left\{ \begin{array}{ll}
\frac{n}{4} + \frac{1}{2} & \mbox{for even $n$}\\
\frac{n}{4} + 1 + \frac{1}{4n} & \mbox{for odd $n$}
\end{array} \right.
\]
and
\[
\maj(I_n) = n\frac{\lceil\frac{n}{2}\rceil}{n}=\left\lceil\frac{n}{2}\right\rceil.\]
$I_n$ gives a lower
bound of $\frac{1}{4}$ on $\Max(\maj,\nav)$, since
$\maj(I_n)-\nav(I_n) \geq \left\lceil\frac{n}{2}\right\rceil-
\frac{n}{4} - 1 - \frac{1}{4n}$.
It follows that, $\Min(\nav, \maj)= -\Max(\maj, \nav) = -\frac{1}{4}$,
and $l(\nav, \maj) = [-\frac{1}{4}, \frac{1}{4}]$.
\end{proof}

\subsection{Majority vs.\ Eager}

According to relative interval analysis, \maj has better performance
than \eag.

\begin{theorem}\label{thm:relint_maj_eag}
 According to relative interval analysis $l(\maj, \eag) = [-\frac{1}{2}, 1]$.
\end{theorem}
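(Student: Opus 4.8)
The plan is to pin down the two endpoints of the interval separately: $\Max(\maj,\eag)=1$ and $\Min(\maj,\eag)=-\tfrac12$. For the right endpoint, the upper bound $\Max(\maj,\eag)\le 1$ is Proposition~\ref{trivialMax}, and for the matching lower bound I would reuse the family $E_n$: by Proposition~\ref{prop_sequences}, $\maj(E_n)-\eag(E_n)=\bigl(n-6+\tfrac{16}{n}\bigr)-2$, so dividing by $n$ and taking $\limsup$ gives $1$. Hence $\Max(\maj,\eag)=1$.

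For the left endpoint, since $\Min(\maj,\eag)=-\Max(\eag,\maj)$ (the identity noted after Definition~\ref{def:relint}), it suffices to show $\Max(\eag,\maj)=\tfrac12$. The upper bound is where I would lean on the earlier subsections: for every $I$ of length $n$, write $\eag(I)-\maj(I)=\bigl(\eag(I)-\nav(I)\bigr)+\bigl(\nav(I)-\maj(I)\bigr)$, so that $\Max_{\eag,\maj}(n)\le \Max_{\eag,\nav}(n)+\Max_{\nav,\maj}(n)$. Theorem~\ref{thm:relint_nav_eag} gives $\Max(\eag,\nav)=-\Min(\nav,\eag)=\tfrac14$ and Theorem~\ref{thm:relint_nav_maj} gives $\Max(\nav,\maj)=\tfrac14$; dividing by $n$ and taking $\limsup$ yields $\Max(\eag,\maj)\le\tfrac14+\tfrac14=\tfrac12$.

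The substantive part is the matching lower bound $\Max(\eag,\maj)\ge\tfrac12$, for which I would exhibit a single family $I_n$ on which \eag locks onto a frequency-$\tfrac12$ item while \maj never buffers it. For even $n$, take $I_n=c,c,b_1,b_2,b_3,c,b_4,c,\ldots,b_{n/2},c$, where $c$ and all the $b_i$ are distinct items; then $c$ occurs $\tfrac n2$ times and each $b_i$ exactly once. The first repeated item is $c$ at time steps $1,2$, so \eag keeps $c$ in its buffer for the whole stream and $\eag(I_n)=n\cdot\tfrac12=\tfrac n2$. For \maj, the prefix $c,c,b_1,b_2$ raises the counter to $2$ and lowers it back to $0$ without evicting $c$ (contributing $4\cdot\tfrac12=2$), and from time step $5$ on the sequence is exactly a $W$-type tail with $c$ playing the role of the frequent item $a_0$: each arriving $c$ only decrements a counter just set to $1$ by the preceding $b_i$, so $c$ is never rebuffered and \maj collects only $2\bigl(\tfrac n2-2\bigr)\tfrac1n$ from the $b_i$'s, giving $\maj(I_n)=3-\tfrac4n$. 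Thus $\limsup_{n\to\infty}\frac{\eag(I_n)-\maj(I_n)}{n}=\limsup_{n\to\infty}\frac{n/2-3+4/n}{n}=\tfrac12$ (odd $n$ is handled by an analogous sequence with one extra trailing $b$). Combining the bounds, $\Max(\eag,\maj)=\tfrac12$, hence $\Min(\maj,\eag)=-\tfrac12$ and $l(\maj,\eag)=[-\tfrac12,1]$.

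The step I expect to be the real obstacle is the design of $I_n$. The tension is that \eag latches only onto the \emph{first} consecutive repeat, which forces the frequency-$\tfrac12$ item to appear consecutively at the very start; but the obvious ways of doing that (such as $c,c,b_1,c,b_2,c,\dots$) keep \maj's counter pinned on $c$ and destroy the bound. The fix is the short "counter-clearing" block $b_1,b_2$ that drops \maj's counter to $0$ before the $W$-like tail begins. Carefully verifying the \maj trace on $I_n$ — in particular that it resumes the $W_n$-behavior of Proposition~\ref{prop_sequences} unchanged once the prefix is processed — is the one place that needs genuine care; the rest reduces to Propositions~\ref{prop_sequences} and~\ref{trivialMax} and to Theorems~\ref{thm:relint_nav_eag} and~\ref{thm:relint_nav_maj}.
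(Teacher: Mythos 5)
Your proposal is correct, and its overall skeleton matches the paper's: $\Max(\maj,\eag)=1$ via Proposition~\ref{trivialMax} and $E_n$, and $\Min(\maj,\eag)=-\Max(\eag,\maj)$ with the lower bound on $\Max(\eag,\maj)$ coming from essentially the same witness family the paper uses (your $c,c,b_1,b_2,b_3,c,b_4,c,\ldots$ is the paper's $a,a,a_1,a_2,a_3,a,a_4,a,\ldots$ relabeled, and your trace of \maj on it, giving $3-\frac{4}{n}$, agrees with the paper's). Where you genuinely diverge is the upper bound $\Max(\eag,\maj)\le\frac12$: the paper proves this from scratch, splitting on whether the most frequent item has at most $\lceil\frac n2\rceil$ occurrences and otherwise invoking Lemma~\ref{lem:wmaj} to lower-bound $\maj(I)$, which yields the exact per-length bound $\eag(I)-\maj(I)\le\frac n2-\frac12$; you instead write $\eag(I)-\maj(I)=(\eag(I)-\nav(I))+(\nav(I)-\maj(I))$, bound the maximum of the sum by the sum of the maxima, and use subadditivity of $\limsup$ together with $\Max(\eag,\nav)=-\Min(\nav,\eag)=\frac14$ (Theorem~\ref{thm:relint_nav_eag}) and $\Max(\nav,\maj)=\frac14$ (Theorem~\ref{thm:relint_nav_maj}). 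This triangle-inequality route is valid (both limsups are finite, and the definitions of $\Min$ and $\Max$ are asymptotic, so the composed bound suffices) and buys a much shorter, reuse-based argument that also explains conceptually why the constant is $\frac14+\frac14$; the paper's direct analysis buys self-containedness and the sharper non-asymptotic inequality. Your handling of odd $n$ is a minor non-issue, since a limsup lower bound already follows from the even-$n$ subsequence.
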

\begin{proof}
By Proposition~\ref{trivialMax}, $\Max(\maj,\eag)\leq 1$.
For the lower bound on $\Max(\maj,\eag)$, we consider the family of
sequences, $E_n$, from Definition~\ref{sequences}.
By Proposition~\ref{prop_sequences},
$\maj(E_n) - \eag(E_n) = (n-6+\frac{16}{n}) -2
= n - 8 + \frac{16}{n}$, and
$\Max(\maj,\eag)\geq \limsup_{n\to \infty} \frac{n-8+\frac{16}{n}}{n} = 1$.
Thus,
$\Max(\maj, \eag) = 1$.

For $\Min(\maj, \eag)$, we consider $\Max(\eag, \maj)$.
First we calculate an upper bound on $\eag(I) - \maj(I)$.
Suppose the input sequence $I$ of length $n$ gives the maximum value of $\eag(I) - \maj(I)$ over all sequences of length $n$.
Suppose $I$ has $k$ distinct items $a_1, a_2, a_3, \ldots , a_k$, and let $f_i=f_I(a_i)$ and $n_i = n_I(a_i)$ for all $i$. Assume that $f_1 \leq f_2 \leq f_3 \leq \ldots \leq f_k$, so $a_k$ is the most frequent item.
First, assume $n_k \leq \lceil \frac{n}{2} \rceil$.
\begin{eqnarray}
\eag(I) - \maj(I) &\leq& nf_k -  1 
\leq  n\frac{\lceil \frac{n}{2} \rceil}{n} - 1
\leq \frac{n}{2} - \frac{1}{2}
\label{eq:majcond}
\end{eqnarray}

It remains to consider the range $\lceil \frac{n}{2} \rceil < n_k \leq n$.
Assume for some positive integer $q$ that  $n_k = \lceil \frac{n}{2} \rceil + q$.
From Lemma~\ref{lem:wmaj}, we know that \maj's result has the lower bound $\maj_W(I) \geq 2(\sum_{i=1}^{k-1} n_i f_i + qf_k) + (\lceil \frac{n}{2} \rceil - \lfloor
\frac{n}{2} \rfloor)f_k$. The summation is minimized when the smallest $k-1$
frequencies are all equal to $\frac{1}{n}$. Since $k-1= \lfloor\frac{n}{2}
\rfloor-q$ in this case,
$\maj(I)\geq 2\left((\lfloor\frac{n}{2}\rfloor -q)\frac{1}{n} + q\frac{\lceil \frac{n}{2}\rceil+q}{n}\right)$. Hence,
\begin{eqnarray}
\eag(I) - \maj(I) &\leq& \left\lceil \frac{n}{2} \right\rceil + q -  2\left( \frac{1}{n}\left(\left\lfloor \frac{n}{2} \right\rfloor -q \right) + q \frac{\lceil \frac{n}{2} \rceil + q}{n}  \right)  \nonumber
\\
&=& \left\{ \begin{array}{ll}
\frac{n}{2} - 1 - \frac{2}{n}(q^2 - q) & \mbox{for even $n$}\vspace{1 mm} \\
\frac{n}{2} - \frac{1}{2} - \frac{1}{n}(2q^2 - q-1) & \mbox{for odd $n$}
\end{array} \right. \nonumber
\\
&\leq&
\frac{n}{2} - \frac{1}{2}
\label{eq:majcond2}
\end{eqnarray}
Thus, the same upper bound holds both when $n_k\leq \lceil\frac{n}{2}\rceil$ and
when $n_k > \lceil\frac{n}{2}\rceil$.

For a lower bound on the maximum value of $\eag(I) - \maj(I)$ for even $n$, we
use the input sequence $I =  a, a, a_1, a_2, a_3, a, a_4,a \ldots, a_{\frac{n}{2}}, a$ (an $a$ every second time after start-up).
For this sequence
\begin{eqnarray}
\eag(I) - \maj(I) &=& n\frac{1}{2} - \left( 4\frac{1}{2} + (n-4)\frac{1}{n}\right) \nonumber
\\
&=& \frac{n}{2} -3 + \frac{4}{n}.   \nonumber
\end{eqnarray}
For odd $n$, we add one $a$ at the end of the even length $I$ which gives $\eag(I) - \maj(I) = \frac{n}{2} -3 + \frac{5}{2n}$.
These lower bounds and the upper bounds from Eq.~\ref{eq:majcond}~and~\ref{eq:majcond2} are asymptotically all equal to $\frac{n}{2}$, so
\begin{equation}
\Min(\maj, \eag) = -\Max(\eag, \maj) = -\limsup _{n\to \infty} \frac{\eag (I) - \maj (I)}{n} = -\frac{1}{2}.\nonumber
\end{equation}
Therefore $l(\maj, \eag) = [-\frac{1}{2}, 1]$.

\end{proof}

\section{Relative Worst Order Analysis}\label{sec:rel_worst}

Relative worst order analysis~\cite{Boyar07} compares two online algorithms directly.
It compares two algorithms on their worst orderings of sequences which have the same content, but possibly different order.
The definition of this measure is somewhat more involved; see~\cite{BFL07j}
for more intuition on the various elements.
As in the case of competitive analysis, here too the relative performance of the algorithms depend on the length of the input sequence $I$.
As in Section~\ref{sec:comp}, we define a modified and more general version of relative worst order analysis. The definition is given for a maximization
problem, but trivially adaptable to be used for minimization problems as well;
only the decision as to when which algorithm is better would change.

The following definition is parameterized by a total ordering,
$\TO$, since we will later use it for both $\leq$ and $\geq$.
\begin{definition}\label{def:wor}
$f$ is a \emph{$\TRIP{\algA}{\algB}{\TO}$-function} if
\[\forall I\WEHAVE \algA_W(I) \TO (f(n)+o(f(n)))\cdot\algB_W(I),\]
where \algA and \algB are algorithms
and $\TO$ is a total ordering.
Recall from Definition~\ref{def:worstpermut} that the notation 
$\textsc{Alg}_W(I)$, where \textsc{Alg} is some algorithm,
denotes the result of \textsc{Alg} on its worst permutation of $I$.

$f$ is a \emph{bounding function with respect to} $\TRIP{\algA}{\algB}{\TO}$ if
$f$ is a $\TRIP{\algA}{\algB}{\TO}$-function and for any
$\TRIP{\algA}{\algB}{\TO}$-function $g$,
$\lim_{n\rightarrow\infty}\frac{f(n)}{g(n)}\TO 1$.

If $f$ is a bounding function with respect to $\TRIP{\algA}{\algB}{\leq}$
and $g$ is a bounding function with respect to $\TRIP{\algA}{\algB}{\geq}$, then
\algA and \algB are said to be \emph{comparable} if
$\lim_{n\rightarrow\infty}f(n)\leq 1$ or $\lim_{n\rightarrow\infty}g(n)\geq 1$.

If $\lim_{n\rightarrow\infty}f(n)\leq 1$, then $\algB$ is better than
$\algA$ and $g(n)$ is a
\emph{relative worst order function of $\algA$ and $\algB$}, and
if $\lim_{n\rightarrow\infty}g(n)\geq 1$, then $\algA$ is better than
$\algB$ and $f(n)$ is a
\emph{relative worst order function of $\algA$ and $\algB$}.
\end{definition}
We use $\WR_{\algA,\algB} = f(n)$ to indicate that $f(n)$ belongs to the
equivalence class of \emph{relative worst order functions of $\algA$ and 
$\algB$}.

The competitive function could also have been defined using this framework,
but was defined separately as a gentle introduction to
the idea.

\subsection{Naive vs.\ Optimal}\label{subsec:relwor_nav_opt}
Relative worst order analysis can show the strength of the simple, but
adaptive, \nav algorithm by comparing it with the powerful \opt.
\nav is an optimal algorithm according to relative worst order
analysis, in the sense that it is equivalent to \opt.
\begin{theorem}\label{thm:relwor_nav_opt}
According to relative worst order analysis $\WR_{\opt,\nav}=1$, so \nav and \opt are equivalent.
\end{theorem}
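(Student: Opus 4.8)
The plan is to prove the much stronger statement that $\opt_W(I)=\nav_W(I)$ for \emph{every} input $I$; once that is established, the constant function $1$ is a bounding function with respect to both $\TRIP{\opt}{\nav}{\leq}$ and $\TRIP{\opt}{\nav}{\geq}$ (any candidate function must tend to $1$ in the limit, since the two quantities are literally equal), and Definition~\ref{def:wor} then immediately gives $\WR_{\opt,\nav}=1$ and equivalence.

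First I would dispose of \nav. Since \nav buffers every item, $\nav(I)=\sum_{t=1}^n f_I(a_t)=\sum_{a\in\mathcal{U}} n_I(a)f_I(a)$ depends only on the multiset of items in $I$, not on their order, so $\nav_W(I)=\nav(I)$.

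Next I would pin down a closed form for \opt on an arbitrary sequence $J$. The key claim is that at every time step $t$ any offline algorithm must have one of $a_1,\dots,a_t$ in its buffer — a one-line induction on $t$ using the rule $s_t\in\{s_{t-1},a_t\}$ — hence $f_J(s_t)\leq M_t:=\max\{f_J(a_i)\mid i\leq t\}$. Conversely, the strategy of keeping a most-frequent-so-far item (keep $s_t=s_{t-1}$ unless $f_J(a_t)>f_J(s_{t-1})$, in which case switch to $a_t$) is feasible: whenever $M_t$ strictly increases, it does so precisely because $a_t$ is the new maximiser, so the switch to $a_t$ is allowed. This strategy attains $M_t$ at every step, so $\opt(J)=\sum_{t=1}^n M_t$. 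This feasibility check is essentially the only place where anything needs to be verified; everything else is bookkeeping, and I do not expect any real difficulty beyond getting this characterization right.

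Finally I would combine the two facts. For an arbitrary ordering $\sigma$ of $I$ we have $M_t\geq f_I(\sigma(t))$ for each $t$, so $\opt(\sigma(I))\geq\sum_t f_I(\sigma(t))=\nav(I)$; thus $\opt_W(I)\geq\nav(I)$. On the other hand, on the sorted ordering $D(I)=a_1',\dots,a_n'$ (nondecreasing frequencies) we have $M_t=f_I(a_t')$, so $\opt(D(I))=\sum_t f_I(a_t')=\nav(I)$, giving $\opt_W(I)\leq\nav(I)$. Hence $\opt_W(I)=\nav(I)=\nav_W(I)$ for all $I$, and therefore $\WR_{\opt,\nav}=1$, i.e., \nav and \opt are equivalent according to relative worst order analysis.
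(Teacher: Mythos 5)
Your proposal is correct and follows essentially the same route as the paper: compare \opt and \nav on the frequency-sorted ordering $D(I)$, where \opt can do no better than \nav, while on every ordering \opt does at least as well as \nav, giving $\opt_W(I)=\nav_W(I)$ and hence $\WR_{\opt,\nav}=1$. Your explicit characterization $\opt(J)=\sum_t M_t$ simply makes rigorous what the paper states more tersely (that $D(I)$ is \opt's worst permutation and that \opt is forced to behave like \nav on it).
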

\begin{proof}
In the aggregate frequency problem, even though \opt knows the whole sequence in advance, it cannot store an item before it first appears in
the sequence.
Thus, for any input sequence $I$, the worst permutation for \opt is the sorting of $I$ according to the increasing order of the frequencies of the items, i.e., $D(I)$.
On this ordering, \opt is forced to behave like \nav.
Therefore, the constant function $1$ is a bounding function with
respect to both $(\opt,\nav,\leq)$ and $(\opt,\nav,\geq)$, so
$\WR_{\opt,\nav}=1$.
\end{proof}

\subsection{Naive vs.\ Eager}

According to relative worst order analysis, \nav is better than \eag.

\begin{theorem}\label{thm:relwor_nav_eag}
 According to relative worst order analysis $\WR_{{\nav},{\eag}} = \frac{n}{2}$.
\end{theorem}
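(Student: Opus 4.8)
The plan is to establish the relative worst order function $\WR_{\nav,\eag} = \frac{n}{2}$ by proving two bounds: that $\frac{n}{2}$ is a $(\nav,\eag,\geq)$-function (the lower-bound direction, showing \nav can be a factor roughly $\frac{n}{2}$ better on worst permutations), and that no better bounding function exists (the matching upper bound). Since we want to conclude $\nav$ is better than $\eag$, by Definition~\ref{def:wor} I need $f$, the bounding function with respect to $(\nav,\eag,\leq)$, to satisfy $\lim f(n) \le 1$; the natural candidate is $f \equiv 1$, i.e.\ $\nav_W(I) \ge \eag_W(I) - o(\cdot)$ for all $I$. First I would observe that \nav is order-independent: $\nav(\sigma(I)) = \nav(I)$ for every permutation $\sigma$, so $\nav_W(I) = \nav(I)$ and this is the largest aggregate frequency attainable by any algorithm on any permutation of $I$ that never buffers an item before its first occurrence — in particular $\nav(I) \ge \eag(\sigma(I))$ for all $\sigma$, hence $\nav_W(I) \ge \eag_W(I)$. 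That gives the $(\nav,\eag,\leq)$-bounding function $1$ immediately, so \nav is the better algorithm and the relative worst order function is the $(\nav,\eag,\geq)$-bounding function $g$.

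Next I would compute $\eag_W(I)$, or at least a good enough estimate. The worst permutation for \eag places a repeated item as early as possible so that \eag locks onto a low-frequency item. Concretely, if $I$ contains any item occurring at least twice, the worst ordering for \eag starts with two copies of a least-frequent-among-the-repeated items (or, if we are free, we want \eag to get stuck on something with frequency as close to $\frac{2}{n}$ as possible while the rest of the sequence could have been highly frequent). So to get the lower bound on $g$, I would exhibit a family of sequences — $E_n$ from Definition~\ref{sequences} is the natural choice — and use Proposition~\ref{prop_sequences}: $\eag_W(E_n) = \eag(E_n) = 2$ since $E_n$ has no nontrivial reordering helping \eag avoid the repeat (every permutation of $E_n$ contains $a,a$ or $b,b$ somewhere, and in fact one can check $\eag_W(E_n)$ is $\Theta(1)$), while $\nav_W(E_n) = \nav(E_n) = n - 4 + \frac{8}{n}$. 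Hence $\nav_W(E_n)/\eag_W(E_n) \to \infty$ like $\frac{n}{2}$, establishing that no $(\nav,\eag,\geq)$-function can be smaller than $\frac{n}{2}$ asymptotically.

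For the matching upper bound, I need to show $\nav_W(I) \ge (\frac{n}{2} - o(n)) \cdot \eag_W(I)$ for \emph{every} $I$, i.e.\ that $\frac{n}{2}$ is itself a $(\nav,\eag,\geq)$-function. Here the key point is to lower bound $\eag_W(I)$: on any ordering, \eag either never locks (behaves like \nav, in which case the ratio is at most the competitive-style bound $\frac{\sqrt n}{2}$ from Theorem~\ref{thm:comp_nav}, which is $o(n)$ times smaller and hence fine) or it locks onto some item $a$ that occurs at least twice, contributing $n_I(a) \ge 2$ to its aggregate frequency, so $\eag_W(I) \ge 2 f_I(a) \ge \frac{2}{n} \cdot 2$... more carefully, $\eag_W(I) \ge n \cdot f_I(a) \ge n \cdot \frac{2}{n} = 2$ when it locks onto $a$ at the very start. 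Meanwhile $\nav_W(I) = \nav(I) \le n$ trivially. So $\nav_W(I)/\eag_W(I) \le n/2$ in the locking case, and $\le \frac{\sqrt n}{2} + o(\sqrt n)$ in the non-locking case; taking the maximum gives the $(\nav,\eag,\geq)$-bound $\frac{n}{2}$. Combining with the lower bound from $E_n$, the bounding function with respect to $(\nav,\eag,\geq)$ is $\frac{n}{2}$, so $\WR_{\nav,\eag} = \frac{n}{2}$.

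The main obstacle I anticipate is the upper-bound argument being slightly delicate about \emph{which} permutation realizes $\eag_W(I)$ versus bounding $\eag_W(I)$ from below uniformly: I must make sure that for \emph{every} $I$, \emph{every} permutation of $I$ gives \eag an aggregate frequency of either $\Omega(1)$ via locking or the \nav-value via no locking, and then correctly combine "$\le n/2$" and "$\le \sqrt n/2$" over the two cases, handling the $o(\cdot)$ bookkeeping so that the stated competitive-function equivalence classes behave correctly. A secondary subtlety is confirming $\eag_W(E_n)$ is genuinely $\Theta(1)$ and not something larger under a clever reordering — but since $E_n$ consists of just two distinct items, $a$ with frequency $\frac{2}{n}$ and $b$ with frequency $\frac{n-2}{n}$, any permutation still forces \eag to lock the moment it sees a repeat, and a worst permutation such as $b,a,a,b,b,\ldots,b$ makes \eag lock onto $a$, yielding a constant; so this is routine once spelled out.
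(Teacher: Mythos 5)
Your overall architecture (comparability in the direction favoring \nav, an upper bound of $\frac{n}{2}$ on $\nav_W(I)/\eag_W(I)$ for all $I$, and the family $E_n$ to show it is attained) is the right one, but two of your key steps do not hold as stated. First, comparability: your justification is the claim that $\nav(I)\geq\eag(\sigma(I))$ for every permutation $\sigma$, because $\nav(I)$ is ``the largest aggregate frequency attainable by any algorithm on any permutation.'' That is false: permute $E_n$ to $b,b,a,a,b,\ldots,b$; then \eag locks onto $b$ and obtains $n-2$, whereas $\nav(E_n)=n-4+\frac{8}{n}$ (and \opt beats \nav on $W_n$ in its given order as well). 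What you actually need is only $\nav_W(I)\geq\eag_W(I)$, and the paper obtains it via Theorem~\ref{thm:relwor_nav_opt}: on \opt's worst permutation $D(I)$ (items sorted by nondecreasing frequency) one has $\opt(D(I))=\nav(D(I))=\nav(I)$, so $\eag_W(I)\leq\eag(D(I))\leq\opt(D(I))=\nav_W(I)$. (You have also swapped the roles of the $\leq$- and $\geq$-bounding functions from Definition~\ref{def:wor} -- the inequalities you write are the intended ones, but the labels are exchanged; e.g.\ $\frac{n}{2}$ is the $(\nav,\eag,\leq)$-bounding function, not a $(\nav,\eag,\geq)$-function, which would be false for sequences of distinct items.)

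Second, and more seriously, your upper-bound argument rests on ``$\eag_W(I)\geq 2$ in the locking case,'' justified only ``when it locks onto $a$ at the very start.'' The worst permutation for \eag need not lock at the start: if it locks at time $t^*$ on an item of frequency $\frac{2}{n}$ after a repeat-free prefix of rare items, then $\eag(\sigma)=\sum_{t<t^*}f_I(\sigma_t)+(n-t^*+1)\frac{2}{n}$, which can be as small as roughly $1$ (one duplicated item placed after $n-2$ singletons already gives $\frac{n+2}{n}$). Pairing only $\eag_W(I)\geq 1$ with the crude $\nav_W(I)\leq n$ yields a bound of about $n$, not $\frac{n}{2}$. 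To recover $\frac{n}{2}$ you must couple the two estimates: a long repeat-free prefix limits how often the most frequent item can appear there, which simultaneously caps \opt (hence \nav) and raises \eag. That is exactly the content of the proof of Theorem~\ref{thm:comp_eag} ($\opt\leq n-\frac{p}{2}$ and $\eag\geq 2-\frac{p}{n}$, ratio exactly $\frac{n}{2}$), and the paper's proof of the present theorem simply transfers that competitive bound to \eag's worst permutation $\sigma^*$: $\nav_W(I)=\nav(\sigma^*(I))\leq\opt(\sigma^*(I))\leq\bigl(\tfrac{n}{2}+o(n)\bigr)\eag(\sigma^*(I))=\bigl(\tfrac{n}{2}+o(n)\bigr)\eag_W(I)$. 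Your lower bound via $E_n$ is fine, except that the worst ordering realizing $\eag_W(E_n)=2$ is $E_n$ itself; the ordering $b,a,a,b,\ldots$ you mention gives about $3$.
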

\begin{proof}
From Theorem~\ref{thm:relwor_nav_opt}, we know that for \opt's worst 
permutation, $I_W$, of any sequence $I$, $\opt(I_W) = \nav(I_W)$.
Any arbitrary online algorithm $\algA$
cannot be better than \opt on any sequence, so \nav and \algA are 
comparable.
For any arbitrary online
algorithm $\algA$ and a worst order, $I_W$, for \algA of any sequence $I$, 
$\frac{\nav(I_W)}{\algA(I_W)} = \frac{\opt(I_W)}{\algA(I_W)}$, so
a competitive function for \algA is an upper bound on the relative worst order
function of \algA and \algB.
By Theorem~\ref{thm:comp_eag}, $\WR(\nav,\eag)\leq \frac{n}{2}$.
Consider the family of sequences, $E_n$, from Definition~\ref{sequences}. These sequences are in the worst ordering for both \eag and \opt.
By Proposition~\ref{prop_sequences}, $\nav(E_n) = n-4+\frac{8}{n}$ and $\eag(E_n)=2$.
Thus, $\nav(E_n) = \frac{n}{2}\eag(E_n) -4 + \frac{8}{n}$.
Consequently,
$\frac{n}{2}$ is a relative worst order function of \nav and \eag,
and $\WR_{{\nav},{\eag}} = \frac{n}{2}$.
\end{proof}

\subsection{Naive vs.\ Majority}

According to relative worst order analysis, \nav is better than
\maj, though not quite as much better as compared to \eag.

\begin{theorem}\label{thm:relwor_nav_maj}
According to relative worst order analysis, $\WR_{{\nav},{\maj}} = \frac{n}{4}$.
\end{theorem}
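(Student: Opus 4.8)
The plan is to prove $\WR_{\nav,\maj} = \frac{n}{4}$ by establishing matching upper and lower bounds on the relative worst order function, reusing the machinery already developed. First, I would argue comparability and the direction: since \nav is equivalent to \opt (Theorem~\ref{thm:relwor_nav_opt}) and no online algorithm beats \opt on any sequence, on any worst permutation $I_W$ for \maj we have $\frac{\nav(I_W)}{\maj(I_W)} = \frac{\opt(I_W)}{\maj(I_W)} \geq 1$, so \nav and \maj are comparable and \nav is the better algorithm; hence the relative worst order function is the bounding function with respect to $(\nav,\maj,\geq)$, and it suffices to compute $\limsup_{n\to\infty} \frac{\nav_W(I)}{\maj_W(I)}$ maximized over $I$ — or rather to bound $\nav_W(I)$ above in terms of $\maj_W(I)$ and exhibit a family achieving it.

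For the upper bound, the key observation is that \nav gives the same aggregate frequency on every permutation of $I$, so $\nav_W(I) = \nav(I) = \sum_{i=1}^n f_I(a_i')$, while Lemma~\ref{lem:wmaj} gives the exact value of $\maj_W(I)$ in terms of the sorted frequencies $D(I)$. So I need to bound $\sum_{i=1}^n f_I(a_i')$ in terms of $2\sum_{i=1}^{\lfloor n/2\rfloor} f_I(a_i') + (\lceil n/2\rceil - \lfloor n/2\rfloor) f_I(a_{\lceil n/2\rceil}')$. Since $\maj_W(I) \geq 1$ always, it would suffice to show $\nav(I) \leq \frac{n}{4}\maj_W(I) + o(n\cdot\maj_W(I))$, but more directly I would split $\nav(I) = \sum_{i=1}^{\lfloor n/2\rfloor} f_I(a_i') + \sum_{i=\lceil n/2\rceil+1}^{n} f_I(a_i') + (\text{middle term})$ and bound the high-frequency tail $\sum_{i > n/2} f_I(a_i')$ using the fact that these frequencies sum to at most $1$ when... no — they need not. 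The cleaner route: let $p$ be the number of occurrences of the most frequent item, so every $f_I(a_i') \leq p/n$ and $\maj_W(I) \geq \lceil n/2\rceil \cdot \frac 1n$ at worst but also $\maj_W(I) \geq$ twice the low tail. Following the structure of Theorem~\ref{thm:relint_nav_maj}'s $\Max(\maj,\nav)$ computation, I expect $\nav(I) - \maj_W(I) \leq \frac{n}{4} + o(n)$ type bounds are not quite what's needed; instead one wants the \emph{ratio}. I would bound $\nav(I) \leq n \cdot (p/n) = p$ and, since $\maj_W(I) \geq \frac{2p}{n}\cdot(\text{something})$... actually the simplest sufficient bound is $\nav(I)\le p$ and $\maj_W(I)\ge \max\{1,\, \tfrac{2(p-\lceil n/2\rceil)^+ \cdot p}{n}\}$; balancing $p \le \frac n2$ gives $\nav(I)\le \frac n2$ against $\maj_W(I)\ge 1$ — that only gives $\frac n2$, too weak. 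So the real work is the more careful accounting as in Eq.~\ref{eq:dif_navmaj}: write $\nav(I) = 2\sum_{i\le \lfloor n/2\rfloor} f_I(a_i') + \text{mid} + \big(\sum_{i>n/2} f_I(a_i') - \sum_{i\le n/2} f_I(a_i')\big) = \maj_W(I) + (\text{that difference})$, and show the difference is at most $\frac n4 \cdot \frac pn + o(n)$ while simultaneously... the cleanest framing is to prove $\nav(I) \le \frac n4 \maj_W(I) + o(n)$, using $\maj_W(I)\ge 1$ to absorb and the tail bound $\sum_{i>n/2}f_I(a_i') \le \frac{\lfloor n/2\rfloor \cdot p}{n}$ together with $\maj_W(I)\ge 2\sum_{i\le\lfloor n/2\rfloor}f_I(a_i')\ge \frac{2(n-p)}{n}\cdot\frac 1n$, optimizing over $p$ to land on $p=n/2$.

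For the lower bound, I would use the family $W_n$ from Definition~\ref{sequences} (or the sequence $I_n$ from Theorem~\ref{thm:relint_nav_maj}), which are already their own worst orderings for \maj: by Proposition~\ref{prop_sequences}, $\nav(W_n) = \frac n4 + O(1)$ while $\maj(W_n) = 1 = \maj_W(W_n)$. This immediately gives $\nav_W(W_n) = \frac n4\,\maj_W(W_n) + O(1)$, so $\frac n4$ is a $(\nav,\maj,\geq)$-function that cannot be improved, matching the upper bound; hence $\WR_{\nav,\maj} = \frac n4$.

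The main obstacle I anticipate is the upper bound calculation — specifically, getting the constant exactly $\frac14$ rather than $\frac12$. The naive bound $\nav(I)\le p \le n/2$ against $\maj_W(I)\ge 1$ loses a factor of $2$; one must exploit that when $p$ is large the low-frequency items number only about $n-p$ but \maj buffers roughly half the \emph{sequence} (twice each), so \maj's contribution from high-frequency items is substantial, and when $p$ is small \nav itself is small. Carefully combining the two regimes via the sorted-frequency identity $\nav(I) = \maj_W(I) + \big(\sum_{i>\lceil n/2\rceil} f_I(a_i') - \sum_{i\le\lfloor n/2\rfloor} f_I(a_i')\big)$ and bounding the parenthesized difference by $p - p^2/n \le n/4$ (maximized at $p=n/2$, exactly as in the $\Max(\nav,\maj)$ argument) — while noting $\maj_W(I)\ge 1$ so the additive $n/4$ is at most $\frac n4\maj_W(I)$ — is the delicate step, and care is needed with the floor/ceiling terms for odd $n$.
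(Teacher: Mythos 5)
Your proposal is correct in substance, but your upper bound takes a genuinely different (and shorter) route than the paper's. The paper attacks the ratio $\frac{\nav_W(I)}{\maj_W(I)}$ directly: starting from Lemma~\ref{lem:wmaj} it runs an exchange argument in two cases ($n_k\le\lfloor\frac{n}{2}\rfloor$ and $n_k\ge\lceil\frac{n}{2}\rceil$), replacing duplicated low-frequency items by fresh distinct items and arguing the ratio can only increase, concluding that $W_n$ is the extremal family and computing the exact maxima $\frac{n}{4}+\frac{1}{2}$ (even $n$) and $\frac{n}{4}+1-\frac{1}{4n}$ (odd $n$). You instead convert an additive bound into a multiplicative one: from the sorted-frequency identity $\nav(I)-\maj_W(I)=\sum_{i>\lceil n/2\rceil}f_I(a_i')-\sum_{i\le\lfloor n/2\rfloor}f_I(a_i')\le p-\frac{p^2}{n}\le\frac{n}{4}$ (exactly the computation already carried out for $\Max(\nav,\maj)$ in Theorem~\ref{thm:relint_nav_maj}) together with $\maj_W(I)\ge 1$, you get $\nav_W(I)\le\bigl(\frac{n}{4}+1\bigr)\maj_W(I)$, and the additive $1$ is absorbed by the $o(f(n))$ slack in Definition~\ref{def:wor}; your lower bound via $W_n$ (with $\maj_W(W_n)=1$, $\nav(W_n)=\frac{n}{4}+O(1)$) and the comparability argument through \opt (Theorem~\ref{thm:relwor_nav_opt}) coincide with the paper's. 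Your route buys brevity and reuse of Section~\ref{sec:realint}; the paper's buys exact, non-asymptotic bounds and identifies the worst-case family explicitly. Two cosmetic slips: the relative worst order function here is the bounding function with respect to \TRIP{\nav}{\maj}{\leq}, not \TRIP{\nav}{\maj}{\geq} (you swap the symbols twice; note $\frac{n}{4}$ is certainly not a \TRIP{\nav}{\maj}{\geq}-function, as sequences of distinct items show), though the task you actually carry out is the right one; and in the comparability step the equality $\nav(I_W)=\opt(I_W)$ on \maj's worst ordering is not literally true---the correct chain is $\nav(I_W)=\nav_W(I)=\opt_W(I)\ge\maj_W(I)=\maj(I_W)$---but this still yields $\nav_W(I)\ge\maj_W(I)$, which is all that is needed.
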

\begin{proof}
As in the proof of the previous theorem,
since \nav and \opt perform the same on their worst orderings
of any sequence, \nav and \maj are comparable.
Next we  derive a bounding function with respect to $(\nav,\maj,\leq)$.
Since \nav's aggregate frequency is the same on any ordering of that
sequence, we can compare \nav and \maj on the same sequence, \maj's
worst ordering of it; that is also a worst ordering for \nav.
Suppose the input sequence $I$ of length $n$ gives the largest ratio
for $\frac{\nav_W(I)}{\maj_W(I)}$ for sequences of length $n$.
Suppose $I$ has $k$ distinct items $a_1, a_2,\ldots, a_k$, and let $f_i=f_I(a_i)$ and $n_i = n_I(a_i)$
for all $i$.
Assume that $f_1 \leq f_2 \leq f_3 \leq \ldots \leq f_k$, so $a_k$ is the most frequent item.

If $n_k\leq \lfloor \frac{n}{2} \rfloor$ then
\begin{eqnarray}
\frac{\nav_W(I)}{\maj_W(I)} &=& \frac{\sum_{i=1}^k n_if_i}{2(\sum_{i=1}^{j-1}n_if_i + pf_j) + \left (\lceil \frac{n}{2} \rceil - \lfloor \frac{n}{2} \rfloor \right) f_j} \nonumber
\\
&=& \frac{\sum_{i=1}^k n_i^2}{2(\sum_{i=1}^{j-1}n_i^2 + pn_j) + \left (\lceil \frac{n}{2} \rceil - \lfloor \frac{n}{2} \rfloor \right) n_j} \label{eq:majw_profit}
\end{eqnarray}
where $j\leq k$ is the largest index such that $\sum_{i=1}^{j-1}n_i + p = \lfloor \frac{n}{2} \rfloor$ for some non-negative integer $p$.
Create another sequence $I'$ from $I$ by replacing all the $a_i$'s where $j<i<k$ with $a_k$ and by replacing $n_j - p - \left (\lceil \frac{n}{2} \rceil - \lfloor \frac{n}{2} \rfloor \right)$ $a_j$'s with $a_k$.
$I'$ will have $j+1$ distinct items and the most frequent item will have $\lfloor \frac{n}{2} \rfloor$ occurrences.
Since all these changes will increase the numerator and not change the denominator in Eq.~\ref{eq:majw_profit}, $I'$ will give at least as large a ratio as $I$, so we consider the sequence $I'$ instead of $I$.
Suppose the items of $I'$, in nondecreasing order of frequency, are $\hat{a}_1, \hat{a}_2,\ldots ,\hat{a}_{j+1}$ and the corresponding counts are $\hat{n}_1, \hat{n}_2,\ldots ,\hat{n}_{j+1}$. Then,
\begin{equation}
\frac{\nav_W(I')}{\maj_W(I')} \leq \frac{\lfloor \frac{n}{2} \rfloor^2 + \sum_{i=1}^j \hat{n}_i^2}{2\sum_{i=1}^{j}\hat{n}_i^2  - \left (\lceil \frac{n}{2} \rceil - \lfloor \frac{n}{2} \rfloor  \right) \hat{n}_j }\label{eq:majw_profit2}
\end{equation}

Consider any item $\hat{a}_i$ where $i\leq j$. Suppose its count is $\hat{n}_i >1$.
Replace the $\hat{n}_i$ copies of $\hat{a}_i$ by $\hat{n}_i$  distinct items which are different from all the other items in $I'$.
In most cases, this replacement will decrease the numerator in Eq.~\ref{eq:majw_profit2} by $\hat{n}_i^2 - \hat{n}_i$ and will decrease the denominator by $2(\hat{n}_i^2 - \hat{n}_i)$.
The only exception is when $i=j$ and $n$ is odd, in which case the denominator will decrease by $2\hat{n}_i^2 - 3\hat{n}_i +1$.
However, in either case, the decrease in the denominator is as large as that in the numerator.
Since the lower bound on the ratio is $1$, this replacement will increase the ratio.
Hence the maximum ratio will be achieved if all the items, except the most frequent item, have frequency $\frac{1}{n}$, so  $I'$ has the same form as $W_n$. Using Proposition~\ref{prop_sequences},
\begin{equation}
\frac{\nav_W(I')}{\maj_W(I')} = \left\{ \begin{array}{ll}
\frac{n}{4} + \frac{1}{2} & \mbox{for even $n$}\vspace{1 mm}\\
\frac{n}{4} + \frac{3}{4n} & \mbox{for odd $n$}
\end{array} \right. \label{eq:majw_up}
\end{equation}

It remains to consider the range $\lceil \frac{n}{2} \rceil \leq n_k \leq n$.
In this case,
\begin{eqnarray}
\frac{\nav_W(I)}{\maj_W(I)} &=& \frac{\sum_{i=1}^k n_if_i}{2(\sum_{i=1}^{k-1}n_if_i + qf_k) + \left (\lceil \frac{n}{2} \rceil - \lfloor \frac{n}{2} \rfloor \right) f_k} \nonumber
\\
&=& \frac{n_k^2 + \sum_{i=1}^{k-1} n_i^2}{2qn_k + 2\sum_{i=1}^{k-1}n_i^2 + \left (\lceil \frac{n}{2} \rceil - \lfloor \frac{n}{2} \rfloor \right) n_k} \label{eq:majw_profit3}
\end{eqnarray}
where $\sum_{i=1}^{k-1}n_i + q = \lfloor \frac{n}{2} \rfloor$ for some non-negative integer $q$.
As in the case of $n_k \leq \lfloor \frac{n}{2} \rfloor$, all the multiple instances of items other than $a_k$ can be replaced by distinct items with frequency $\frac{1}{n}$ without decreasing the ratio.
Next, if $q>0$ and we replace one instance of $a_k$ with some an item with frequency $\frac{1}{n}$, i.e., decrease $q$ by one, then the numerator in Eq.~\ref{eq:majw_profit3} will be decreased by $n_k^2 - (n_k-1)^2 - 1 = 2(n_k-1)$ and the denominator will be decreased by $$2qn_k - 2(q-1)(n_k-1)-2 +  \left\lceil \frac{n}{2} \right\rceil - \left\lfloor \frac{n}{2} \right\rfloor = 2(n_k + q -2) + \left\lceil \frac{n}{2} \right\rceil - \left\lfloor \frac{n}{2} \right\rfloor$$
Since the lower bound of the ratio is $1$, this replacement will increase the ratio while decreasing value of $q$.
Thus, the largest ratio will achieved when $q=0$, and
\begin{eqnarray}
\frac{\nav_W(I)}{\maj_W(I)} &\leq& \frac{\lceil \frac{n}{2} \rceil^2 + \lfloor \frac{n}{2}\rfloor}{2\lfloor \frac{n}{2}\rfloor  + \lceil \frac{n}{2} \rceil - \lfloor \frac{n}{2} \rfloor} \nonumber \\
& = & \left\{ \begin{array}{ll}
\frac{n}{4} + \frac{1}{2} & \mbox{for even $n$}\\
\frac{n}{4} + 1 - \frac{1}{4n} & \mbox{for odd $n$}
\end{array} \right. \label{eq:majw_up2}
\end{eqnarray}
By
Eqns.~\ref{eq:majw_up}~and~\ref{eq:majw_up2},
$\frac{n}{4}$ is a $(\nav,\maj,\leq)$-function.

Since the proof of the upper bounds above
shows that $W_n$ gives the largest ratio among sequences of length~$n$,
we can use the same sequence for the lower bound, showing that
$\frac{n}{4}$ is a bounding function with respect to $(\nav,\maj,\leq)$,
so $\WR_{\nav,\maj} = \frac{n}{4}$.

\end{proof}

\subsection{Majority vs.\ Eager}

\begin{theorem}\label{thm:relwor_maj_eag}
According to relative worst order analysis, \maj and \eag are incomparable.
\end{theorem}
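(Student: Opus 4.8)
The plan is to show incomparability by exhibiting two families of sequences: one family on which \maj's worst-order result dominates \eag's by a growing factor, and another family on which \eag's worst-order result dominates \maj's by a growing factor. If both directions give unbounded ratios, then neither $\lim f(n)\le 1$ nor $\lim g(n)\ge 1$ can hold for the bounding functions, so \maj and \eag are incomparable in the sense of Definition~\ref{def:wor}.

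For the first direction, I would reuse $E_n$ from Definition~\ref{sequences}. Since $E_n$ contains a repeated item at the very start, every permutation that \eag could face still allows \eag to lock onto a low-frequency item in bad cases; more carefully, I would argue that $\eag_W(E_n)$ is small — on the order of a constant — because \eag can be forced (by the ordering $a,a,b,b,\ldots,b$, which is already \eag's worst permutation) to buffer $a$ forever, giving $\eag_W(E_n)=2$ by Proposition~\ref{prop_sequences}. Meanwhile $\maj_W(E_n)$ must be computed via Lemma~\ref{lem:wmaj}: with one item of frequency $\frac{n-2}{n}$ and two of frequency $\frac{1}{n}$... wait, $E_n$ has items $a$ (freq $\frac{2}{n}$) and $b$ (freq $\frac{n-2}{n}$), with $n-2$ copies of $b$ and $2$ of $a$; applying Lemma~\ref{lem:wmaj} to $D(E_n)$, the worst \maj result picks up the $\lceil n/2\rceil$ least frequent occurrences, which are dominated by copies of $b$ once $n$ is large, giving $\maj_W(E_n)=\Theta(n)$. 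Hence $\maj_W(E_n)/\eag_W(E_n)=\Theta(n)$, so no constant-bounded $(\eag,\maj,\leq)$-type relation holds; this kills one comparability condition.

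For the second direction I would use a sequence tailored so that \eag performs well on all orderings while \maj is forced to thrash. The natural candidate is something like the sequence $I = a,a,a_1,a_2,a_3,a,a_4,a,\ldots$ used in the proof of Theorem~\ref{thm:relint_maj_eag}, or more simply a sequence containing a repeated pair of a high-frequency item plus many singletons: once \eag detects the repeat it keeps the frequent item, so $\eag_W$ is large regardless of ordering (the repeat can be placed anywhere, but \eag's worst ordering still yields $\Theta(n)$ when the repeated item is the globally frequent one — I need the repeated item to be forced to appear consecutively in every permutation, which requires it to have frequency exceeding $1/2$ so that in any ordering two copies are adjacent... actually that's not automatic, so instead I should pick a sequence where the repeated pair is guaranteed, e.g., make the frequent item have $> n/2$ copies so that by pigeonhole every permutation has two adjacent). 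With such an $I$, $\eag_W(I)=\Theta(n)$ but, by Lemma~\ref{lem:wmaj}, $\maj_W(I)$ only picks up the least frequent half, giving $O(1)$ when the remaining items are all singletons — so $\eag_W(I)/\maj_W(I)=\Theta(n)$, killing the other comparability condition.

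The main obstacle is the first direction: bounding $\eag_W$ from above on a family where \maj does well. Unlike \nav and \opt, \eag's performance genuinely depends on the ordering, and its worst ordering can be far from its best, so I must be careful to choose a family where the \emph{worst} ordering for \eag is still bad — that is, where some permutation forces \eag into a low-frequency trap — while simultaneously \maj's worst ordering is forced to be good. The $E_n$ family handles this because $a,a,b,\ldots,b$ is itself \eag's worst ordering (it triggers the lock-on immediately to the rare item), and Lemma~\ref{lem:wmaj} pins \maj down. Symmetrically, for the second direction the subtlety is guaranteeing a repeated item in every permutation, which I resolve by a majority (frequency $>1/2$) condition via pigeonhole. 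Once both families are in hand, the conclusion that the bounding functions in both orderings are unbounded — hence the comparability clause of Definition~\ref{def:wor} fails in both directions — is immediate.
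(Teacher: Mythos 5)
Your argument has the same overall structure as the paper's: exhibit one family where $\maj_W$ exceeds $\eag_W$ by a factor $\Theta(n)$ and another where the reverse holds, so neither bounding function in Definition~\ref{def:wor} can satisfy the comparability condition. Your first direction is exactly the paper's: $E_n$ is a worst ordering for both algorithms, with $\eag_W(E_n)=2$ and $\maj_W(E_n)=n-6+\frac{16}{n}$. For the second direction the paper simply uses $W_n$: $\maj_W(W_n)=1$ by Proposition~\ref{prop_sequences}, and $\eag_W(W_n)=\nav(W_n)\approx \frac{n}{4}$, the point being that the only item that can ever be repeated is $a_0$, so on any permutation \eag either never locks (and equals \nav, whose value is permutation-invariant) or locks onto the most frequent item and does at least as well as \nav --- no forced-adjacency/majority condition is needed, which is the complication you spend most effort on. Your strict-majority variant does work, but only if you pin the count down: writing $n_I(a_0)=\lceil \frac{n}{2}\rceil+q$ with all other items singletons, Lemma~\ref{lem:wmaj} gives $\maj_W(I)\approx 2\bigl((n-n_I(a_0))\frac{1}{n}+q\,\frac{n_I(a_0)}{n}\bigr)=\Theta(q)$, because the ``least frequent half'' necessarily contains $q$ copies of $a_0$ itself. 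So your blanket claim that \maj ``only picks up the least frequent half, giving $O(1)$'' for an item of frequency above $\frac12$ is wrong in general (with $\frac{3n}{4}$ copies, $\maj_W(I)=\Theta(n)$); it holds precisely when $q=O(1)$, e.g.\ the minimal majority $q=1$, which is also exactly what your pigeonhole step needs. With that instantiation $\maj_W(I)=2+o(1)$, $\eag_W(I)=\Theta(n)$, and your incomparability conclusion goes through just as in the paper.
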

\begin{proof}
First, we show that \maj can be much better than \eag.
Consider the family of sequences, $E_n$, from Definition~\ref{sequences}. These sequences are in their worst orderings
for both \maj and \eag. By Proposition~\ref{prop_sequences}, $\eag(E_n) = 2$, so $$\maj_W(E_n) = n-6+\frac{16}{n} \geq \left(\frac{n}{2}-3+\frac{8}{n}\right)\eag_W(E_n).$$

Now, we show that \eag can be much better than \maj.
Consider the family of sequences, $W_n$, from Definition~\ref{sequences}. These sequences are in their worst orderings
for \maj, so by Proposition~\ref{prop_sequences}, $\maj_W(W_n) = 1$. A worst ordering for \eag is
$$W'_n = a_1,a_2,\ldots,a_{\lceil\frac{n}{2}\rceil},a_0,a_0,\ldots,a_0,$$ where there
are $\lfloor \frac{n}{2} \rfloor$ copies of $a_0$. $\eag(W'_n) = \nav(W_n)$, which by Proposition~\ref{prop_sequences}
 is $\frac{n}{4} + \frac{1}{2}$ when $n$ is even and $\frac{n}{4} + \frac{3}{4n}$ when $n$ is odd.
Thus, $$\eag_W(W_n) \geq \frac{n}{4}\maj_W(W_n).$$

These two families of sequences show that \maj~ and \eag~ are incomparable under relative worst order analysis.

\end{proof}

\section{Conclusion and Future Work}
The frequent items problem for streaming was considered as an online
problem. Three deterministic algorithms, \nav, \maj, and \eag were
compared using three different quality measures: competitive analysis,
relative worst order analysis, and relative worst order ratio. According
to competitive analysis, \nav is the better algorithm and \maj and
\eag are equivalent. According to relative interval analysis, \nav
and \maj are equally good and both are better than \eag. According
to relative worst order analysis, \nav and \opt are equally good
and better than \maj and \eag, which are incomparable.

All three analysis techniques studied here are worst case measures.
According to both competitive analysis and relative worst order analysis,
\nav is the best possible online algorithm, and according the
relative worst order analysis, it is as good as \maj and better than
\eag. This is a consequence of \nav being very adaptive and, as a result,
good at avoiding the extreme poor performance cases.
Both \maj and \eag attempt to
keep the most frequent items in the buffer for longer
than their frequency would warrant. The heuristic approaches hurt
these algorithms in the worst case.

Relative interval analysis compares the algorithms on the same sequence
in a manner which, in addition to the worst case scenarios, also
takes the algorithms' best performance into account
to some extent. This makes \maj's sometimes superior performance visible,
whereas \eag, not being adaptive at all, does not benefit in the same way from
its best performance.
In some sense, \maj's behavior can be seen as swinging around the
behavior of \nav, with worse behavior on some sequences counter-acted
by correspondingly better behavior on other sequences.

Our conclusion is that purely worst behavior measures do not give
indicative results for this problem. Relative interval analysis does
better, and should possibly be supplemented by some expected case
analysis variant.
To that end, natural performance measures to consider would be
bijective and average analysis~\cite{Angelopoulos07}.
However, as the problem is stated in~\cite{Giannakopoulos12} and studied here,
the frequent items problem has an infinite universe from which the
items are drawn. Thus, these analysis techniques
cannot be applied directly to the problem in any meaningful way.
Depending on applications, it could be realistic to assume a finite universe.
This might give different results than those obtained here, and might allow
the problem to be studied using other measures, %
giving results dependent on the size of the universe.
Another natural extension of this work is to consider multiple buffers, which
also allows for a richer collection of algorithms~\cite{Berinde09},
or more complicated, not necessarily discrete, objective
functions~\cite{Cohen06}.

\bibliography{ref_stream}
\bibliographystyle{plain}

\end{document}